\documentclass[a4paper,12pt,twoside]{article}
\usepackage{amsmath, amsfonts, amssymb, amscd}
\usepackage{url}
\usepackage[pdfborder={0 0 0}]{hyperref}
\usepackage{graphicx}
\usepackage{mathrsfs}
\usepackage{amsthm}
\usepackage{braket}
\usepackage{accents}
\usepackage{oldgerm}
\usepackage{color}
\usepackage{hyperref}
\usepackage{yfonts}
\usepackage{ctable}

\bibliographystyle{JHEP}

\setlength{\unitlength}{1mm}

\newlength{\dinwidth}
\newlength{\dinmargin}

\setlength{\dinwidth}{21.0cm}
\setlength{\textwidth}{15.7cm} 
\setlength{\textheight}{23.0cm}

\setlength{\dinmargin}{\dinwidth}
\addtolength{\dinmargin}{-\textwidth}
\setlength{\dinmargin}{0.5\dinmargin}
\setlength{\oddsidemargin}{-1.0in}
\addtolength{\oddsidemargin}{\dinmargin}
\setlength{\evensidemargin}{\oddsidemargin}
\setlength{\marginparwidth}{0.9\dinmargin}
\setlength{\marginparsep}{8pt}
\setlength{\marginparpush}{5pt}

\setlength{\columnseprule}{0mm}
\setlength{\columnsep}{7mm}

\setlength{\topmargin}{-0.5in}
\setlength{\headheight}{30pt}
\setlength{\headsep}{10pt} 

\setlength{\footskip}{20pt}


\newcommand{\vphi}{\varphi}

\DeclareMathAlphabet{\mathpzc}{OT1}{pzc}{m}{it}

\theoremstyle{plain}
\newtheorem{theorem}{Theorem}[section]
\newtheorem{proposition}[theorem]{Proposition}
\newtheorem{lemma}[theorem]{Lemma}

\theoremstyle{definition}
\newtheorem{definition}[theorem]{Definition}
\newtheorem{remark}[theorem]{Remark}

\theoremstyle{remark}

\numberwithin{equation}{section}


\newcommand{\Ibb}[1]{ {\rm I\ifmmode\mkern -3.6mu\else\kern -.2em\fi#1}}
\newcommand{\ibb}[1]{\leavevmode\hbox{\kern.3em\vrule
     height 1.2ex depth -.3ex width .2pt\kern-.3em\rm#1}}
\newcommand{\Cl}{{\ibb C}}
\newcommand{\Rl}{{\Ibb R}}
\newcommand{\Nl}{{\Ibb N}}
\newcommand{\Zl}{\mathbb{Z}}
\newcommand{\Quat}{\mathbb{H}}


\definecolor{lightgray}{rgb}{0.8,0.8,0.8}

\newcommand{\ra}{\rightarrow}
\newcommand{\lra}{\longrightarrow}

\newcommand{\uhr}{\upharpoonright}

\newcommand{\Bl}{\biggl}
\newcommand{\Br}{\biggr}
\newcommand{\bl}{\bigl}
\newcommand{\br}{\bigr}



\newcommand{\Aut}{\text{\normalfont\textrm{Aut}}}


\newcommand{\id}{\mathrm{id}}

\newcommand{\eps}{\varepsilon}

\newcommand{\bs}{\boldsymbol}
\renewcommand{\cal}{\mathcal}
\newcommand{\scr}{\mathscr}

\newcommand{\cc}{\scr{C}}


\newcommand{\Tr}{\mathrm{Tr}}

\newcommand{\BH}{\cal{B(H)}}

\newcommand{\bj}{{\bs j}}

\newcommand{\gdS}{\mathrm{\bf g}}

 \newcommand{\CAR}{\mathrm{CAR}}


\newcommand{\del}{\partial}



\newcommand{\Mat}{\mathrm{Mat}}

\newcommand{\F}{\cal{F}}

\newcommand{\SO}{\mathrm{SO}}

\newcommand{\Uone}{\mathrm{U(1)}}
\newcommand{\Spin}{\mathrm{Spin}}
\newcommand{\Sp}{\mathrm{Sp}}

\renewcommand{\L}{\cal{L}}
\newcommand{\tLo}{\widetilde{\L}_0}

\newcommand{\veps}{\varepsilon}


\newcommand{\A}{\cal{A}}

\newcommand{\HS}{\cal{H}}

\newcommand{\hs}{\scr{H}}

\newcommand{\OO}{\mathcal{O}}

\newcommand{\W}{\cal{W}}

\newcommand{\U}{\mathrm{{\bf U}}}

\newcommand{\C}{\cal{K}}
\renewcommand{\S}{\cal{S}}

\allowdisplaybreaks
\begin{document} 
%
%
\begin{center}
	$ $

	\vspace{3cm}
	{\large\bf Deformations of quantum field theories on de Sitter spacetime} 
	
	\vspace{0.6cm}  
	\noindent {Eric Morfa-Morales}
	
	\vspace{0.2cm}
	\noindent {Mathematical Physics Group, University of Vienna,}
	
	{Boltzmanngasse 5, A-1090, Vienna, Austria}

	\texttt{eric.morfa.morales@univie.ac.at}

\end{center}

\vspace{0.5cm}\noindent
{\bf Abstract:} Quantum field theories on de Sitter spacetime with global $\Uone$ gauge symmetry are deformed using the joint action of the internal symmetry group and a one-parameter group of boosts. The resulting theory turns out to be wedge-local and non-isomorphic to the initial one for a class of theories, including the free charged Dirac field. The properties of deformed models coming from inclusions of $\CAR$-algebras are studied in detail.

%
\newpage
\tableofcontents

\section{Introduction} 
\label{sec:}
The construction of interacting quantum field theories by deformations using suitable actions of $\Rl^n$ has recently attracted much attention \cite{GrosseLechner07}, \cite{BuchholzSummers08}, \cite{GrosseLechner08}, \cite{BuchholzLechnerSummers10}, \cite{DybalskiTanimoto10}, \cite{DappiaggiLechnerMorfa-Morales10}. A first example of this kind was found for the scalar free field on Minkowski space by Grosse and Lechner \cite{GrosseLechner07}, where the deformed operators fulfill a weakened form of locality (wedge-locality) and the two-particle scattering is non-trivial. Later on this deformation method (warped convolution) was generalized to arbitrary quantum field theories on Minkowski space by Buchholz and Summers \cite{BuchholzSummers08}. It is formulated in terms of an action of the translation subgroup of the Poincar\'e group and the resulting theory has similar properties as in the scalar free field case. Within a Wightman setting this deformation manifests itself as a deformation of the tensor product of the underlying Borchers-Uhlmann algebra \cite{GrosseLechner08}. Subsequently \cite{BuchholzLechnerSummers10} it was realized that there is in fact a close connection between warped convolutions and a well known deformation method for C$^*$-algebras in mathematics, namely, Rieffel deformations \cite{Rieffel93}. It turns out that the warped algebra forms a representation of the Rieffel deformed algebra for a fixed deformation parameter. Later on this deformation scheme was applied to various situations in quantum field theory. In the chiral conformal case the first examples of massless models which are interacting and asymptotically complete were constructed by these methods \cite{DybalskiTanimoto10}. Quantum field theories on a class of curved spacetimes relevant to cosmology can also be deformed using the flow of suitable Killing vector fields instead of translations \cite{DappiaggiLechnerMorfa-Morales10}.

The de Sitter spacetime does not belong to the class of spacetimes considered in \cite{DappiaggiLechnerMorfa-Morales10}. It is the purpose of this paper to apply the warped convolution deformation procedure to quantum field theories on this spacetime. 
We use a combination of external and internal symmetries, consisting of a one-parameter group of boosts associated with a wedge and a global $\Uone$ gauge symmetry, as an $\Rl^2$-action to define the deformation. The resulting theory is wedge-local and unitarily inequivalent to the undeformed one for a class of theories, including the free charged Dirac field. 

In Section \ref{sec:deSitterSpacetime} the basic notions concerning the geometry and causal structure of de Sitter space are recalled and we discuss the de Sitter group together with its universal covering. After that, the covariance and inclusion properties of a class of distinguished regions (wedges) in de Sitter space are studied. 

In Section \ref{sec:DeformationsQFT} we consider quantum field theories with global gauge symmetry within the algebraic setting (field nets) and we show how to reconstruct a wedge-local field net from an inclusion of two C$^*$-algebras, which are in a suitable relative position to a wedge. Then the warped convolution deformation is applied to a field net with global $\Uone$ gauge symmetry  and the properties of the resulting theory are studied. 

In Section \ref{subsec:CARnets} a particular class of field nets is investigated in more detail, namely, nets of $\CAR$-algebras. For these theories the deformed operators can be computed explicitly. The fixed-points of the deformation map are determined and it is shown that the deformed and undeformed field nets are non-isomorphic.

In the conclusions we comment on warped convolutions in terms of purely external and internal symmetries using other Abelian subgroups of the de Sitter and gauge group.

\section{de Sitter spacetime} 
\label{sec:deSitterSpacetime}
\subsection{Geometry and causal structure} 
\label{subsec:}
The de Sitter spacetime $(M,\gdS)$ is a vacuum solution of Einstein's equation with positive cosmological constant. 
It is maximally symmetric, so it admits 10 Killing vector fields, which is the maximum number for a spacetime of dimension four. It is also globally hyperbolic, so the Cauchy problem for partial differential equations of hyperbolic type, such as the Klein-Gordon and Dirac equation, is well-posed. Furthermore, it is a special case of the Friedmann-Robertson-Walker spacetimes which describe a spatially homogeneous and isotropic universe and it plays a prominent role in many inflationary scenarios for the early universe \cite{Linde09}.

Most conveniently it can be represented as the embedded submanifold
\begin{equation*}
	M=\bl\{x\in\Rl^5: \eta(x,x)=-1\br\} 
\end{equation*}
of five-dimensional Minkowski space $(\Rl^5,\eta)$, where $\Rl^5$ is identified with $T_x\Rl^5$, $x\in\Rl^5$. The signature of $\eta$ is $(1,-1,-1,-1,-1)$ and the de Sitter radius is fixed to one. The metric $\gdS$ on $M$ is the induced metric from the ambient space, {\it i.e.}, $\gdS=\iota^*\eta$, where $\iota:M\hookrightarrow \Rl^5$ is the embedding map. We use the ambient space notation to parametrize the de Sitter hyperboloid, so we write $x=(x^0,x^1,\vec{x})$, $\vec{x}=(x^2,x^3,x^4)$ for points in $M$, subject to the relation $(x^0)^2-\sum_{k=1}^4(x^k)^2=-1$, where $\{x^\mu:\mu=0,\dots 4\}$ is a Cartesian coordinate system of $\Rl^5$.

Since the metric on $M$ is the induced metric from the ambient Minkowski space, the causal structure is also inherited. Hence points in $(M,\gdS)$ are called timelike, spacelike or null related, if they are so as points in $(\Rl^5,\eta)$, respectively. We fix a time orientation in $(M,\gdS)$ once and for all. The interior of the causal complement of a spacetime region $\OO\subset M$ is denoted by $\OO'$.

For the generators of the Clifford algebra which is associated with the quadratic form $\eta(x,x)$ on the vector space $\Rl^5$ we use the representation \cite{Gazeau07}
\begin{equation*}
\gamma_0=
\begin{pmatrix}
 1 & 0\\
 0 & -1
\end{pmatrix},\qquad
\gamma_1=
\begin{pmatrix}
 0 & 1\\
 -1 & 0
\end{pmatrix},\qquad
\gamma_k=
\begin{pmatrix}
 0 & e_k\\
 e_k & 0
\end{pmatrix},\, k=2,3,4,
\end{equation*}
where $e_k=(-1)^k\sigma_{k-1}$ and $\sigma_{k-1}$ are the Pauli matrices. We have $\{\gamma_\mu,\gamma_\nu\}=2\eta_{\mu\nu}\cdot 1$ and $\{1,e_1,e_2,e_3\}$ is a basis of the quaternions $\Quat$. Note that this representation is in fact not faithful, since $i\gamma_0\cdots\gamma_4=1$. Similar to the case of four-dimensional Minkowski space, where points are parametrized by hermitian $2\times 2$ matrices, we parametrize points on the de Sitter hyperboloid by $2\times 2$ quaternionic matrices. This parametrization is useful for the discussion of the covering group of the de Sitter group later on. Define
\begin{equation*}
 M\ni (x^0,x^1,\vec{x})\longmapsto
 \undertilde{x}:=\sum_{\mu=0}^4x^\mu\gamma_\mu=
\begin{pmatrix}
 x^0 & -q \\
 \overline{q} & -x^0
\end{pmatrix}
\in\Mat(2,\Quat),
\end{equation*}
where $\overline{q}=(x^1,-\vec{x})$ is the quaternionic conjugate of $q=(x^1,\vec{x})$. Conversely, every $2\times 2$ matrix of the above form determines a point in de Sitter space via
\begin{equation*}
x^\mu=\frac{1}{4}\Tr(\gamma_\mu \undertilde{x}).
\end{equation*}
The map $x\mapsto \undertilde{x}$ defines an isomorphism between $M$ and $H(2,\Quat)\gamma_0\subset\Mat(2,\Quat)$, where $H(2,\Quat)$ are the hermitian $2\times 2$ matrices over $\Quat$. Furthermore, there holds $\eta(x,x)1= \undertilde{x}^*\gamma_0\undertilde{x}\gamma_0$, where $\undertilde{x}^*$ is the transpose of the quaternionic conjugate of $\undertilde{x}$.

\subsection{The de Sitter group and its covering}
\label{subsec:DeSitterGroup}
The isometry group of $(M,\gdS)$ is 
\begin{equation*}
 \mathrm{O}(1,4)=\{\Lambda\in\Mat(5,\Rl):\Lambda^T\eta\Lambda=\eta\}
\end{equation*}
and its action on $M$ is given by the action of the Lorentz group in the ambient Minkowski space.
This group is a ten-dimensional, non-compact, non-connected and real Lie group which has four connected components. The connected component which contains the identity is denoted by $\L_0:=\SO(1,4)_0$. This group is called de Sitter group (proper orthochronous Lorentz group) and its elements preserve the orientation and time orientation of $(M,\gdS)$. 

Since we also want to treat quantum fields with half-integer spin we consider the two-fold (and universal) covering of $\L_0$, which is the spin group $\widetilde{\L}_0:=\Spin(1,4)$. Hence there exists a short exact sequence of group homomorphisms
\begin{equation*}
 1\lra\ker(\pi)=\{\pm 1\}\lra \widetilde{\L}_0\overset{\pi}{\lra} \L_0\lra 1.
\end{equation*}
There holds $\L_0\cong\widetilde{\L}_0/\{\pm 1\}$ and $\widetilde{\L}_0$ is simply connected. Note that the Lie group $\Spin(1,4)$ is isomorphic to the pseudo-symplectic group \cite{Takahashi63}
\begin{equation*}
	\Sp(1,1)=\Bl\{
\begin{pmatrix}
	a & b\\
	c & d 
\end{pmatrix}
\in \Mat(2,\Quat):\;\bar{a}b=\bar{c}d,\; |a|^2-|c|^2=1,\; |d|^2-|b|^2=1\Br\}.
\end{equation*}
Equivalently, $g\in\Sp(1,1)$ if and only if $g^*\gamma_0g=\gamma_0$. In this representation the covering homomorphism $\pi:\widetilde{\L}_0\ra \L_0$ is given by
\begin{equation*}
 (\pi(g))_{\mu\nu}=\frac{1}{4}\Tr(\gamma_\mu g\gamma_\nu g^{-1}),\quad g\in\tLo
\end{equation*}
and $\tLo$ acts on $M$ by conjugation $\undertilde{x}\mapsto g\undertilde{x}g^{-1}$.

\subsection{de Sitter wedges} 
\label{subsec:wedges}
Now we discuss the typical localization regions of the deformed quantum fields from Section \ref{sec:DeformationsQFT}. In \cite{BorchersBuchholz99} a de Sitter wedge is defined as the causal completion of the worldline of a uniformly accelerated observer (timelike geodesic) in de Sitter space. Equivalently, they can be characterized as intersections of wedges in the ambient Minkowski space \cite{ThomasWichmann97} and the de Sitter hyperboloid. Hence we specify a reference (or right) wedge by
\begin{equation*}
 W_0:=\{x\in\Rl^5: x^1>|x^0|\}\cap M
\end{equation*}
and define the family of wedges $\W$ as the set of all de Sitter transforms of $W_0$:
\begin{equation*}
 \W:=\{g W_0:g\in\L_0\}.
\end{equation*}
By definition, $\L_0$ acts transitively on $\W$. Each wedge $W\in\W$ has an attached edge $E_W$ which is a two-sphere. We have $E_{W_0}=\{x\in M:x^0=x^1=0\}$ and $E_W=g E_{W_0}$ for $W=g W_0$. The wedge $W$ coincides with a connected component of the causal complement of the edge $E_W$ \cite{BorchersBuchholz99}. For the stabilizer of the wedge $W$ we write 
$\L_0(W):=\{g\in\L_0:g W=W\}$.

From the properties of wedges in $(\Rl^5,\eta)$ follows that the causal complement of a wedge is again a wedge and that every $W\in\W$ is causally complete, {\it i.e.}, $W''=(W')'=W$.  
Furthermore, the family $\W$ is causally separating, so given spacelike separated double cones $\OO_1,\OO_2\subset M$, there exists a $W\in\W$ such that $\OO_1\subset W\subset \OO_2\,'$ (see \cite{ThomasWichmann97}). 
\begin{remark}
 Wedges are frequently used as localization regions in quantum field theory \cite{BisognanoWichmann75}, \cite{Borchers00}, \cite{BuchholzDreyerFlorigSummers00}.  Although a net of observables over wedges is to a certain degree non-local, it is possible to construct a local theory over double cones from it. This is achieved by taking suitable intersections of algebras associated with wedges. Since $\W$ is causally separating the resulting theory satisfies, in particular, local commutativity (see \cite{BuchholzSummers08,BuchholzLechnerSummers10}). To decide whether these intersections are in fact non-trivial remains a challenging task in four dimensions.
\end{remark}

For every $W\in\W$ there exists a one-parameter group $\Gamma_W=\{\Lambda_W(t)\in\L_0:t\in\Rl\}$, such that each $\Lambda_W(t),t\in\Rl$ maps $\W$ onto $\W$ and $\Lambda_W(t) W=W$ for all $t\in\Rl$. Moreover 
\begin{equation}
\label{eq:GammaCovarianceL0}
 \Lambda_{gW}(t)=g\Lambda_W(t) g^{-1},\quad g\in\L_0,\; t\in\Rl.
\end{equation}
Associated with $\Gamma_W$ is a future-directed Killing vector field $\xi_W$ in the wedge $W$ and the worldline from which the wedge is constructed is an integral curve of (a portion of) this vector field. 
Furthermore, for every $W\in\W$ there exists a reflection $j_W\in \L_0$ which maps $\W$ onto $\W$ and satisfies 
\begin{equation}
\label{eq:WedgeReflection}
 j_W W=W',\quad j_{gW}=gj_Wg^{-1},\quad g\in\L_0.
\end{equation}
Since $\L_0$ acts transitively on $\W$ we only need to specify these maps for $W_0$. We choose
\begin{equation}
\label{eq:GammaW0}
	\Lambda_{W_0}(t):=\left( 
	\begin{array}{ccc}
		\cosh(2\pi t) & \sinh(2\pi t) & 0\\
		\sinh(2\pi t) & \cosh(2\pi t) & 0\\
		0&0& 1_3 
	\end{array}
	\right),\quad j_{W_0}(x^0,x^1,\vec{x}):=(x^0,-x^1,-\vec{x})
\end{equation}
and note that $\xi_{W_0}=x^1\del_{x^0}+x^0\del_{x^1}$ is the associated Killing vector field.
\begin{remark}
 Within the context of applications of Tomita-Takesaki modular theory in quantum field theory the standard choice for the reflection  is $(x^0,x^1,\vec{x})\mapsto (-x^0,-x^1,\vec{x})$, which is an element of the extended symmetry group $\L_+=\L_0\rtimes \Zl_2$. In this paper we have no intention to use these techniques and the choice (\ref{eq:GammaW0}) appears to be more natural since we restrict our considerations to $\L_0$. However, all of our results can be generalized to the group $\L_+$ in a straightforward manner.
\end{remark}

The following lemma collects the basic properties of these maps.

\begin{lemma}
\label{lem:Gamma-J}
 Let $W\in\W$ and $\Lambda_W(t)\in \Gamma_W$, $j_W$ be as above. Then
\begin{enumerate}
 \item[a)] $g\Lambda_W(t)g^{-1}=\Lambda_W(t),\; g\in\L_0(W),t\in\Rl$,
 \item[b)] $j_W\Lambda_W(t)j_W=\Lambda_W(-t),\; t\in\Rl$,
\end{enumerate}
\end{lemma}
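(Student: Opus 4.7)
The strategy is to reduce both statements to the reference wedge $W_0$ via the covariance relations (\ref{eq:GammaCovarianceL0}) and (\ref{eq:WedgeReflection}), and then verify the $W_0$-case by a direct computation with the explicit representatives (\ref{eq:GammaW0}). Writing $W = hW_0$ for some $h \in \L_0$ (using the transitivity of the $\L_0$-action on $\W$), one has $\Lambda_W(t) = h\Lambda_{W_0}(t)h^{-1}$ and $j_W = hj_{W_0}h^{-1}$, while the condition $g \in \L_0(W)$ translates into $\tilde g := h^{-1}gh \in \L_0(W_0)$. Consequently,
\begin{align*}
g\Lambda_W(t)g^{-1} &= h\bigl[\tilde g\,\Lambda_{W_0}(t)\,\tilde g^{-1}\bigr]h^{-1}, \\
j_W\Lambda_W(t)j_W &= h\bigl[j_{W_0}\Lambda_{W_0}(t)j_{W_0}\bigr]h^{-1},
\end{align*}
so it suffices to establish the two identities in the special case $W = W_0$.

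For part~(a), I would first show that $\L_0(W_0)$ has a block structure. Any $g \in \L_0(W_0)$ preserves $W_0 = \{x \in M : x^1 > |x^0|\}$, hence also its edge $E_{W_0} = \{x^0 = x^1 = 0\}$, and additionally respects the time orientation. These constraints force $g$ to be block-diagonal with respect to the decomposition $\Rl^5 = \Rl^2_{(x^0,x^1)} \oplus \Rl^3_{\vec x}$, with the first block in $\SO(1,1)_0$ and the second in $\SO(3)$. Since $\Lambda_{W_0}(t)$ from (\ref{eq:GammaW0}) has the same block structure with trivial $\SO(3)$-part, the element $g$ commutes with $\Lambda_{W_0}(t)$: the $\SO(3)$-blocks commute trivially with the identity, and the $\SO(1,1)_0$-blocks commute because $\SO(1,1)_0$ is an Abelian one-parameter group.

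For part~(b) at $W_0$, this is a $2\times 2$ matrix computation. The reflection $j_{W_0}$ acts as $\diag(1,-1)$ on the $(x^0,x^1)$-plane and as $-1_3$ on $\vec x$. Conjugating the $(x^0,x^1)$-block of $\Lambda_{W_0}(t)$ by $\diag(1,-1)$ flips the sign of the off-diagonal entries $\sinh(2\pi t)$ and leaves the diagonal entries $\cosh(2\pi t)$ invariant, producing precisely the $(x^0,x^1)$-block of $\Lambda_{W_0}(-t)$; the $\vec x$-block, being the identity, is unchanged. Combined with the covariance reduction above, this yields the lemma.

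I do not expect any genuine obstacle. The only nontrivial input is the explicit description of $\L_0(W_0)$ as $\SO(1,1)_0 \times \SO(3)$, which follows from the characterization of $W_0$ as one of the two connected components of the causal complement of its edge, together with the orientation and time-orientation preservation built into $\L_0$.
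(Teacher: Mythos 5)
Your proof is correct, but for part~(a) it takes a more explicit route than the paper. The paper's own argument for~(a) is a one-liner: since $gW=W$ for $g\in\L_0(W)$, the covariance relation (\ref{eq:GammaCovarianceL0}) gives $\Lambda_W(t)=\Lambda_{gW}(t)=g\Lambda_W(t)g^{-1}$ directly, with no reduction to $W_0$ and no structural analysis of the stabilizer. You instead derive the block decomposition $\L_0(W_0)=\SO(1,1)_0\times\SO(3)$ relative to $\Rl^5=\Rl^2_{(x^0,x^1)}\oplus\Rl^3_{\vec x}$ and verify the commutation by hand; this is sound (preservation of $W_0$ forces preservation of $E_{W_0}$ and hence of its span and $\eta$-orthogonal complement, and time-orientation plus $\det=1$ pin down the two blocks), and it has the merit of actually establishing the stabilizer structure that the paper only states after the lemma, as well as the consistency underlying the assignment $W\mapsto\Gamma_W$ in (\ref{eq:GammaCovarianceL0}). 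What the paper's route buys is brevity and independence from any explicit matrix realization; what yours buys is a self-contained verification that does not presuppose that $\Lambda_{W_0}(t)$ is already known to commute with $\L_0(W_0)$. For part~(b) the two proofs coincide: reduce to $W_0$ via (\ref{eq:GammaCovarianceL0}) and (\ref{eq:WedgeReflection}), then check that conjugating the boost block by $\diag(1,-1)$ flips the sign of the $\sinh(2\pi t)$ entries. No gaps.
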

\begin{proof}
	a): For $g\in\L_0(W)$ holds $\Lambda_W(t)=\Lambda_{gW}(t)=g\Lambda_W(t)g^{-1}$ for all
	$t\in\Rl$ by (\ref{eq:GammaCovarianceL0}). b) follows from $j_{W_0}\Lambda_{W_0}(t)j_{W_0}=\Lambda_{W_0}(-t)$ and (\ref{eq:GammaCovarianceL0}), (\ref{eq:WedgeReflection}).
\end{proof}

The stabilizer of $W$ has the form $\L_0(W)=\Gamma_W\times \SO(3)$, where $\SO(3)$ are rotations in $E_W$. Hence $\Gamma_W$ coincides with the center of $\L_0(W)$. From b) follows that the Killing vector fields associated with $W$ and $W'$ differ only by temporal orientation.

The following lemma shows that the possible causal configurations of wedges are very much constrained in de Sitter space.

\begin{lemma}
\label{lem:WedgeInclusions}
Let $W_1,W_2\in\W$ and $W_1\subset W_2$. Then $W_1=W_2$.
\end{lemma}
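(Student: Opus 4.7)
The plan is to reduce Lemma~\ref{lem:WedgeInclusions} to the statement that $W_1$ and $W_2$ share the same edge, after which the conclusion is immediate: each wedge is one of the two connected components of the causal complement of its edge, so two wedges with a common edge are either identical or causally complementary, and $W_1 \subset W_2$ together with $W_2 \cap W_2' = \emptyset$ rules out the latter.

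By the transitivity of $\L_0$ on $\W$ I would first reduce to the case $W_2 = W_0$ and write $W_1 = gW_0$ for some $g \in \L_0$, so the task becomes: $gW_0 \subset W_0$ implies $gW_0 = W_0$. Taking closures yields $E_{gW_0} = gE_{W_0} \subset \overline{W_0} = \{x \in M : x^1 \geq |x^0|\}$. The heart of the argument, and the step I expect to be the main obstacle, is to upgrade this containment to $E_{gW_0} = E_{W_0}$. My strategy uses that $E_{gW_0}$ lies inside the spacelike $3$-plane $P := g\{x \in \Rl^5 : x^0 = x^1 = 0\}$ through the \emph{origin} of $\Rl^5$, since $g$ acts linearly on the ambient Minkowski space. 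For any $x \in E_{gW_0} = P \cap M$, linearity gives $-x \in P$, and $\eta(-x,-x) = -1$ gives $-x \in M$, so $-x \in E_{gW_0} \subset \overline{W_0}$ as well. Imposing $y^1 \geq |y^0|$ on both $y = x$ and $y = -x$ then forces $x^0 = x^1 = 0$, whence $E_{gW_0} \subset E_{W_0}$; since both are closed, connected $2$-spheres in $M$, equality follows by invariance of domain (or by comparing linear spans in $\Rl^5$, noting that the unit sphere of any spacelike $3$-plane linearly spans it).

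The genuine obstacle is exactly this linearity step: it is precisely the absence of translations in $\Isom(M,\gdS)$ that distinguishes de Sitter from Minkowski space, where the analogous statement fails because a wedge can be properly translated into itself. Modulo that point, the remaining ingredients are soft topology and the edge-determination property for wedges recorded in Section~\ref{subsec:wedges}.
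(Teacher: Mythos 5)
Your argument is correct, and it reaches the same final reduction as the paper (common edge plus the fact that a wedge is a connected component of the causal complement of its edge, so $W_1\subset W_2$ forces equality), but the way you establish that the edges coincide is genuinely different. The paper stays at the level of ambient Minkowski wedges: it writes $W_k=M\cap\widetilde W_k$, uses that the causal closure of $W_k$ in $\Rl^5$ is the full ambient wedge $\widetilde W_k$ to upgrade $W_1\subset W_2$ to $\widetilde W_1\subset\widetilde W_2$, and then invokes the fact that two nested Minkowski wedges whose edges both contain the origin must have equal edges, so $E_{W_1}=E_{W_2}$ follows by intersecting with $M$. You instead bypass both the causal-completion step and the Minkowski-wedge inclusion lemma: after normalizing $W_2=W_0$, you exploit only the linearity of the $\L_0$-action, namely that $E_{gW_0}=P\cap M$ for a spacelike $3$-plane $P$ through the origin, so the edge is invariant under $x\mapsto -x$; feeding $x$ and $-x$ into the defining inequality $x^1\ge|x^0|$ of $\overline{W_0}$ forces $E_{gW_0}\subset E_{W_0}$, and the span (or invariance-of-domain) argument gives equality. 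Your route is more elementary and self-contained, resting only on the explicit form of $W_0$ and the centrality of the edge planes, whereas the paper's is shorter given the facts it imports from the Thomas--Wichmann analysis of wedges. One cosmetic remark: in the last step you do not actually need the dichotomy ``identical or causally complementary'' (which presupposes that the causal complement of the edge has exactly two components); it suffices that distinct connected components are disjoint, which is incompatible with $\emptyset\neq W_1\subset W_2$ — this is exactly how the paper concludes.
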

\begin{proof}
The wedges $W_1,W_2$ can be written as $W_k=M\cap \widetilde{W}_k$, $k=1,2$, where $\widetilde{W}_k$ is a wedge in the ambient Minkowski space. Since the causal closure of $W_k$ in $\Rl^5$ coincides with $\widetilde{W}_k$, there follows $\widetilde{W}_1\subset \widetilde{W}_2$ from $W_1\subset W_2$. As the edges $E_{\widetilde{W}_k}$ both contain the origin, there follows $E_{\widetilde{W}_1}=E_{\widetilde{W}_2}$ and also $E_{W_1}=E_{W_2}$ since $E_{W_k}=M\cap E_{\widetilde{W}_k}$. The assertion $W_1=W_2$ follows from the assumption $W_1\subset W_2$ together with the fact that $W_k$ is a connected component of the causal complement of $E_{W_k}$.
\end{proof}

\begin{remark}
All the previous statements carry over to the covering $\tLo$ in a straightforward manner. Define an action of $\tLo$ on $\W$ with the covering homomorphism
\begin{equation}
\label{eq:CoveringWedgeAction}
 g W:=\pi(g) W,\quad g\in\tLo,\, W\in\W,
\end{equation}
which is transitive, since $\L_0$ acts transitively. The one-parameter group $\Gamma_W\subset\L_0$ lifts to a {\em unique} one-parameter group $\widetilde{\Gamma}_W\subset\tLo$ and for its elements we write $\lambda_W(t),t\in\Rl$. Again, since $\tLo$ acts transitively on $\W$, we only need to specify these maps for $W_0$. We have \cite[p.368]{Takahashi63}
\begin{equation*}
 \lambda_{W_0}(t)=
\begin{pmatrix}
 \cosh(\pi t) & \sinh(\pi t)\\
 \sinh(\pi t) & \cosh(\pi t)
\end{pmatrix}.
\end{equation*}
Clearly, $\lambda_W(t)W=W$ and $\lambda_{gW}=g\lambda_{W}g^{-1}$ for all $g\in\tLo$, $W\in\W$ with respect to the action (\ref{eq:CoveringWedgeAction}). For the lift of the reflection $j_{W_0}$ we choose 
\begin{equation*}
 \bj_{W_0}:=
\begin{pmatrix}
 1 & 0\\
 0 & -1
\end{pmatrix}.
\end{equation*}                                                                                                                                            
Again, $\bj_W W=W'$ and $\bj_{gW}=g\bj_{W_0}g^{-1}$ for all $g\in\tLo$, $W\in\W$ with respect to (\ref{eq:CoveringWedgeAction}). Hence analogous statements as in Lemma \ref{lem:Gamma-J} hold for $\widetilde{\Gamma}_W$ with $\L_0$ replaced by $\tLo$, {\it i.e.},
\begin{equation}
\label{eq:CovarianceGammaJ}
	g\lambda_W(t)g^{-1}=\lambda_W(t),\qquad \bj_W\lambda_W(t)\bj_W=\lambda_W(-t),
\end{equation}
for all $g\in\tLo(W):=\{g\in\tLo:gW=W\}$ and $t\in\Rl$.
\end{remark}

\section{Deformations of quantum field theories on de Sitter spacetime} 
\label{sec:DeformationsQFT}
\subsection{Field nets} 
\label{subsec:FieldNetsOnDS}
We work in the operator-algebraic approach to quantum field theory on curved spacetimes \cite{Dimock80} adapted to the concrete case of de Sitter space \cite{BorchersBuchholz99}.
To this end, we consider a C$^*$-algebra $\F$ (field algebra) whose elements are physically interpreted as (bounded functions of) quantum fields on $M$. We equip $\F$ with a local structure and focus on localization in wedges, since this turns out to be stable under the deformation. Hence we associate to each $W\in\W$ a C$^*$-subalgebra $\F(W)\subset \F$. Due to the trivial inclusion properties of wedges in de Sitter space (see Lemma \ref{lem:WedgeInclusions}) the usual isotony condition reduces to well-definedness of $W\mapsto\F(W)$.

We assume that there exists a strongly continuous representation $\alpha$ of $\tLo$ by automorphisms on $\F$, such that
\begin{itemize}
 \item[1)] (De Sitter Covariance): for all $g\in\tLo$, $W\in\W$ holds 
\begin{equation*}
\label{eq:DeSitterCovariance}
	\alpha_g(\F(W))=\F(gW). 
\end{equation*}
\end{itemize}

\noindent
Furthermore, we assume that there is a Lie group $G$ (global gauge group) and a strongly continuous representation $\sigma$ of $G$ by automorphisms on $\F$, such that
\begin{itemize}
 \item[2)] (Gauge Invariance): for all $h\in G$, $g\in\tLo$, $W\in\W$ holds 
\begin{equation}
\label{eq:GaugeInvariance}
	\sigma_h(\F(W))=\F(W),\qquad \sigma_h\circ\alpha_g=\alpha_g\circ \sigma_h. 
\end{equation}
\end{itemize}

\noindent
We assume that there exists a distinguished element $h_0\in G$ such that $\gamma:=\sigma(h_0)$ satisfies
\begin{equation}
\label{eq:BFauto} 
	\gamma^2=\id.
\end{equation}
This (grading) automorphism can be used to separate an operator $F\in \F(W)$ into its Bose($+$) and Fermi($-$) part via $F_\pm:=(F\pm\gamma(F))/2$. 

\begin{remark}
	For convenience, we assume that the datum $(\{\F(W):W\in\W\},\alpha,\sigma,\gamma)$ is faithfully and covariantly represented on a Hilbert space $\HS$. So to each $\F(W)$ corresponds a norm-closed $*$-subalgebra of $\BH$ and the automorphisms $\alpha,\sigma,\gamma$ are implemented by the adjoint action of unitary operators $U,V,Y$ on $\HS$, respectively. Note that this is no loss of generality since we can either use the covariant representation which exists for C$^*$-dynamical systems (see \cite{BuchholzLechnerSummers10, DappiaggiLechnerMorfa-Morales10} and references therein) or we work in the GNS-representation of a de Sitter- and gauge-invariant state. In the former case we assume that $\HS$ is separable, as it is the case in a variety of concrete models.
\end{remark}

We assume that the grading satisfies $Y^2=1$. With the operator $Y$ a unitary twisting map $Z$ is defined to treat the (anti)commutation relations between the Bose/Fermi parts of a field on the same footing \cite{DoplicherHaagRoberts65}. Let $Z:=(1-iY)/\sqrt{2}$ and 
\begin{equation*}
 \F(W)^t:=Z\F(W)Z^{-1}.
\end{equation*}
The map $F\mapsto ZFZ^{-1}$ is an isomorphism of $\F(W)$ and we have \cite{Foit83}
\begin{equation*}
 \F(W)^{tt}=\F(W),\quad \F(W)^t{}'=\F(W)'{}^t,\quad W\in\W,
\end{equation*}
where the commutant is understood as the relative commutant in $\F$. Locality is now formulated in the following way
\begin{itemize}
 \item[3)] (Twisted Locality): $\F(W)\subset \F(W')^t{}',\;W\in\W$.
\end{itemize}
Twisted locality is equivalent to the ordinary (anti)commutation relations between the Bose/Fermi parts $F_\pm$ of fields, {\it i.e.}, $[F_+,G_\pm]=[F_\pm,G_+]=\{F_-,G_-\}=0$ for $F\in\F(W)$, $G\in\F(W')$, $W\in\W$ (see \cite{DoplicherHaagRoberts65}).

For later reference we define the joint action $\tau:\tLo\times G\ra \Aut(\F)$ of the external and internal symmetry group on $\F$ by
\begin{equation}
\label{eq:jointautomorphism}
 \tau_{g,h}:=\alpha_g\circ \sigma_h,\quad g\in\tLo,\; h\in G.
\end{equation}
The unitary which implements this action is $\U(g,h):=U(g)V(h)$.

\begin{remark}
A datum $(\{\F(W):W\in\W\},\alpha,\sigma,\gamma)$ which satisfies conditions 1) $-$ 3) is referred to as a \emph{wedge-local field net}. We simply write $\F$ to denote it, if no confusion can arise. Examples are nets of $\CAR$-algebras with gauge symmetry, such as the free charged Dirac field.
\end{remark}

\begin{remark}
 Given a field net, the net of observables is defined as
\begin{equation*}
	\A(W):=\{F\in\F(W):\sigma_h(F)=F,\; h\in G\},
\end{equation*}
so observables form the gauge-invariant part of the field net. 
\end{remark}

Due to the transitive action of $\tLo$ on $\W$, it is possible to define a wedge-local field net in terms of an inclusion of just two C$^*$-algebras which are in a suitable relative position to $W_0$. This point of view will be advantageous for the warped convolution later on, since the deformation of a wedge-local field net amounts to deforming the relative position of one algebra in the other. 
Following \cite{BuchholzLechnerSummers10} we make the following definition.
\begin{definition}
\label{def:CausalBorchersSystem}
 A {\em causal Borchers system} $(\F_0,\F,\alpha,\sigma,\gamma)$ relative to $W_0\in\W$ consists of
 \begin{itemize}
  \item[$-$] an inclusion $\F_0\subset \F$ of concrete C$^*$-algebras,
  \item[$-$] commuting representations $\alpha:\tLo\ra \Aut(\F)$ and $\sigma:G\ra \Aut(\F)$ which are unitarily implemented,
  \item[$-$] an automorphism $\gamma$ on $\F$ which commutes with $\alpha$ and $\sigma$ and satisfies $\gamma^2=\id$,
 \end{itemize}
 such that
 \begin{enumerate}
  \item[a)] $\alpha_{g}(\F_0)= \F_0,\; g\in\tLo(W_0)$,
  \item[b)] $\alpha_{\bj_{W_0}}(\F_0)\subset (\F_0)^t{}'$,
  \item[c)] $\sigma_h(\F_0)=\F_0$, $h\in G$.
 \end{enumerate}
\end{definition}

\begin{proposition}
\label{prop:fieldquadruple-fieldnet} 
	Let $(\F_0,\F,\alpha,\sigma,\gamma)$ be a causal Borchers system relative to $W_0$. Then 
	\begin{equation}
		\label{eq:WT-WLN} W:=g W_0\longmapsto \alpha_{g}(\F_0)=:\F(W), 
	\end{equation}
	defines a wedge-local field net together with $(\alpha,\sigma,\gamma)$. 
\end{proposition}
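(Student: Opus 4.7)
The plan is to verify, in order, that (\ref{eq:WT-WLN}) is well-defined, and then that the three defining axioms of a wedge-local field net hold.

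\textbf{Well-definedness and covariance.} First I would check that $\F(W)$ does not depend on the choice of $g$ with $W=gW_0$. If $g_1W_0=g_2W_0$ then $g_2^{-1}g_1\in\tLo(W_0)$, so condition a) of Definition \ref{def:CausalBorchersSystem} gives $\alpha_{g_2^{-1}g_1}(\F_0)=\F_0$, and applying $\alpha_{g_2}$ yields $\alpha_{g_1}(\F_0)=\alpha_{g_2}(\F_0)$. Transitivity of the $\tLo$-action on $\W$ (Remark on the covering) ensures every $W\in\W$ is covered. De Sitter covariance is then immediate: for $W=g'W_0$,
\begin{equation*}
	\alpha_g(\F(W))=\alpha_g\alpha_{g'}(\F_0)=\alpha_{gg'}(\F_0)=\F(gg'W_0)=\F(gW),
\end{equation*}
using that $\alpha$ is a representation of $\tLo$.

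\textbf{Gauge invariance.} The commutation $\sigma_h\circ\alpha_g=\alpha_g\circ\sigma_h$ is already part of the given data. For invariance of each wedge algebra, write $W=gW_0$ and compute
\begin{equation*}
	\sigma_h(\F(W))=\sigma_h\alpha_g(\F_0)=\alpha_g\sigma_h(\F_0)=\alpha_g(\F_0)=\F(W),
\end{equation*}
where the third equality uses condition c) of Definition \ref{def:CausalBorchersSystem}.

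\textbf{Twisted locality.} This is the most delicate step. For $W=gW_0$, the covering action (\ref{eq:CoveringWedgeAction}) and the identity $\bj_{W_0}W_0=W_0'$ give $W'=gW_0'=g\bj_{W_0}W_0$, hence $\F(W')=\alpha_{g\bj_{W_0}}(\F_0)$. I would first observe that $\alpha_g$ commutes with the twisting operation: since $\gamma$ commutes with $\alpha$, the implementing unitary $U(g)$ commutes with $Y$, and hence with $Z=(1-iY)/\sqrt{2}$, so $\alpha_g(F^t)=\alpha_g(F)^t$ for any $F\in\F$. Because $\alpha_g$ is a $*$-automorphism of the ambient algebra $\F$ implemented by a unitary, it also maps the relative commutant of a subalgebra to the relative commutant of the image. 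Combining these two observations with condition b),
\begin{equation*}
	\alpha_{\bj_{W_0}}(\F_0)\subset(\F_0)^t{}',
\end{equation*}
and applying $\alpha_g$ to both sides gives $\alpha_{g\bj_{W_0}}(\F_0)\subset\alpha_g(\F_0)^t{}'$, i.e.\ $\F(W')\subset\F(W)^t{}'$. The twisted commutation relation is symmetric, so this is equivalent to $\F(W)\subset\F(W')^t{}'$, which is the required twisted locality.

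The main obstacle I anticipate is the twisted locality step: one must carefully track that the twist $Z$ is compatible both with the covariance $\alpha_g$ (via commutation with $Y$, which rests on $\sigma\circ\alpha=\alpha\circ\sigma$) and with the passage between $\F(W)\subset\F(W')^t{}'$ and $\F(W')\subset\F(W)^t{}'$. Everything else reduces to transitivity of the $\tLo$-action on $\W$ together with the three stability conditions a)--c).
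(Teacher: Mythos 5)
Your proof is correct and follows essentially the same route as the paper: well-definedness from condition a), covariance from the representation property, gauge invariance from condition c), and twisted locality by applying $\alpha_g$ to condition b) using that $\alpha_g$ commutes with the twist. The only cosmetic difference is that the paper parametrizes $W'=gW_0$ so as to land directly on $\F(W)\subset\F(W')^t{}'$, whereas you obtain $\F(W')\subset\F(W)^t{}'$ and pass to the desired form by symmetry (equivalently, by relabeling $W\to W'$, since $\W$ is stable under causal complements and $W''=W$).
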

\begin{proof}
	We begin by proving well-definedness. From $g_1 W_0=g_2W_0$ follows $g_2^{-1}g_1 W_0= W_0$ and $\alpha_{g_2^{-1}g_1}(\F_0)= \F_0$ by assumption a). Hence $\alpha_{g_1}(\F_0)= \alpha_{g_2}(\F_0)$ and the assertion follows.
	
	Covariance holds by definition. 

	Twisted locality is proved in a similar way. Let $W'=gW_0$. Since $W=g\bj_{W_0}W_0$ there holds
	\begin{equation*}
	    \F(W)=\alpha_{g\bj_{W_0}}(\F_0)
	    \subset\alpha_{g}((\F_0)^t{}')
	    =\alpha_{g}(\F_0)^t{}'=\F(W')^t{}',\quad W\in\W,
	\end{equation*}
	where we used condition b), together with the assumption that each $\alpha_g$, $g\in\tLo$ is a homomorphism which commutes with $\gamma$.
	
	The gauge invariance of the local algebras follows immediately:
	\begin{equation*}	
		\sigma_h(\F(W))=\sigma_h(\alpha_g(\F_0))=\alpha_g(\sigma_h(\F_0))=\alpha_g(\F_0)
	=\F(W),
	\end{equation*}
	since the representations $\alpha,\sigma$ commute and by assumption c). 
\end{proof}

\noindent
Note that the converse of this proposition is trivially true. Given a wedge-local field net, then $\F(W_0)\subset \F$ satisfies property a) by covariance and b) by twisted locality. Property c) holds by definition.

\begin{remark}
A causal Borchers system $(\F_0,\F,\alpha,\sigma,\gamma)$ is closely connected to the notion of a causal Borchers triple \cite{BuchholzLechnerSummers10} on Minkowski spacetime (see also \cite{Lechner10} for the related notion of a wedge triple). In this setting, $\F_0\subset\BH$ is a von Neumann algebra and $\alpha$ is the adjoint action of a unitary representation $U$ of the Poincar\'e group. In addition one assumes that the joint spectrum of the generators of the translations $U\uhr \Rl^4$ is contained in the closed forward lightcone (spectrum condition) and that $\F_0$ admits a cyclic and separating vector (existence of a vacuum state). Gauge transformations are absent in this setting since nets of observables are considered.
\end{remark}

\begin{remark}
 For the sake of brevity we will write $\F_0\subset \F$ to denote a causal Borchers system relative to $W_0$.
\end{remark}

\subsection{Deformations of field nets with $\Uone$ gauge symmetry}
 \label{subsec:DeformationsFieldNets}
Now we apply the warped convolution deformation method to our present setting. Let $\F_0\subset\F$ be a causal Borchers system relative to $W_0$. The basic idea is to define a deformation $(\F_0)_{\xi,\kappa}$ of the small algebra $\F_0$ using a suitable $\Rl^2$-action (see below) in such a way that $(\F_0)_{\xi,\kappa}\subset\F$ is again a causal Borchers system. Then the inclusion $(\F_0)_{\xi,\kappa}\subset \F$ gives rise to another wedge-local field net by Proposition \ref{prop:fieldquadruple-fieldnet}. 
\\

\noindent
For the warped convolution we make the further assumption that the gauge group is $G=\Uone\cong\Rl/2\pi\Zl$. The representation $\sigma$ of $\Uone$ yields a $2\pi$-periodic $\Rl$-action $F\mapsto\sigma_{s}(F)$ by automorphisms on $\F$. The warped convolution is now defined with the $\Rl^2$-action $\tau^\xi$ coming from the one-parameter group of boosts $\widetilde{\Gamma}_{W_0}\subset \tLo$ and the internal symmetry group:
\begin{equation*}
\Rl^2\ni (t,s)\longmapsto 
\tau_{\lambda_{W_0}(t),s}=:\tau^\xi_{t,s}:\F\lra \F.
\end{equation*}
Note that $\widetilde{\Gamma}_{W_0}$ implicitly depends on the Killing field $\xi:=\xi_{W_0}$ which is associated with $W_0$ (see Section \ref{subsec:wedges}). We will use the notation
\begin{equation*}
 \lambda_\xi(t):=\lambda_{W_0}(t),\quad U_\xi(t):=U(\lambda_\xi(t)),\quad \U_\xi(t,s):=\U(\lambda_\xi(t),s).
\end{equation*}

Since the warped convolution is defined is terms of oscillatory integrals of operator-valued functions, we first need to specify suitable smooth elements of the C$^*$-algebra $\F$ for which these integrals are well-defined. The joint action (\ref{eq:jointautomorphism}) is a strongly continuous action of the Lie group $\tLo\times \Uone$ which acts automorphically, and therefore isometrically, on $\F$. The algebra $\F_0$ is, in general, only invariant under the action of the subgroup $\widetilde{\L}_0(W_0)\times \Uone$. Adapted to the present setting, and following \cite{DappiaggiLechnerMorfa-Morales10}, we consider the following notion of smoothness with respect to the subgroup $\widetilde{\Gamma}_{W_0}\times\Uone$. 

\begin{definition}
	An operator $F\in\F$ is called {\it $\xi$-smooth}, if $\Rl^2\ni v\mapsto \tau^\xi_v(F)\in\F$ is smooth in the norm topology of $\F$. The set of all $\xi$-smooth operators in $\F$ is denoted by $\F^\infty_\xi$.
\end{definition}

Note that the set $\F^\infty_\xi$ is a norm-dense $*$-subalgebra of $\F$ (see \cite{Taylor86}). Another ingredient for the definition of the warped convolution is the antisymmetric (real) matrix 
\begin{equation*}
	\theta:= 
	\begin{pmatrix}
		0 & 1\\
		-1 & 0 
	\end{pmatrix}
\end{equation*}
and an arbitrary but fixed real number $\kappa$ which plays the role of a deformation parameter. 

\begin{definition}
The {\em warped convolution} of an operator $F\in \F^\infty_\xi$ is defined as 
\begin{equation}
	\label{eq:WarpedConvolution} 
	F_{\xi,\kappa}:=\frac{1}{4\pi^2}\lim_{\eps\ra 0}\int_{\Rl^2\times \Rl^2}dv\, d v'\; e^{-ivv'}\, \chi(\eps v,\eps v')\, \tau^\xi_{\kappa\theta v}(F)\U_\xi(v').
\end{equation}
Here $vv'$ denotes the standard Euclidean inner product of $v,v'\in\Rl^2$ and $\chi\in C^\infty_0(\Rl^2\times \Rl^2)$, $\chi(0,0)=1$ is a cutoff function which is necessary to define this operator-valued integral in an oscillatory sense.  
\end{definition}

From the results in \cite{BuchholzLechnerSummers10} follows that the above limit exists in the strong operator topology of $\BH$ on the dense domain 
\begin{equation*}
	\HS^\infty:=\{\Phi\in\HS: \tLo\times\Uone\ni (g,s) \mapsto \U(g,s)\Phi\in \HS \text{ is smooth in } \|\cdot \|_{\HS}\} 
\end{equation*}
and is independent of the chosen cutoff function $\chi$ within the specified class. The densely defined operator $F_{\xi,\kappa}$ extends to a bounded and smooth operator, which is denoted by the same symbol. 

\begin{definition}
    The space of all vectors which are smooth with respect to the representation $\U_\xi$ is denoted by $\HS^\infty_\xi$ .
\end{definition}

Furthermore, it is shown in \cite{BuchholzLechnerSummers10} that the warped convolution (\ref{eq:WarpedConvolution}) is closely related to Rieffel deformations of C$^*$-algebras \cite{Rieffel93}. In this context one defines, instead of a deformation of the algebra elements, a new product $\times_{\xi,\kappa}$ on $\F^\infty_\xi$ by
\begin{equation*}
 F\times_{\xi,\kappa}F':=\frac{1}{4\pi^2}\lim_{\eps\ra 0}\int_{\Rl^2\times\Rl^2}d v\,d v' e^{-ivv'}
\chi(\eps v,\eps v') \tau^\xi_{\kappa \theta v}(F)\tau^\xi_{v'}(F').
\end{equation*}
This limit exists in the norm-topology of $\F$ for all $F,F'\in\F^\infty_\xi$ and $F\times_{\xi,\kappa}F'$ is again in $\F^\infty_\xi$. The completion of $(\F^\infty_\xi,\times_{\xi,\kappa})$ in a suitable norm yields another C$^*$-algebra \cite{Rieffel93}.

The following lemma collects the basic properties of the map $F\mapsto F_{\xi,\kappa}$ and shows that the warped operators form a representation of the Rieffel deformed C$^*$-algebra for a fixed deformation parameter.

\begin{lemma}[\cite{BuchholzLechnerSummers10,DappiaggiLechnerMorfa-Morales10}]
\label{lem:WCbasicproperties}
 Let $F,F'\in\F^\infty_\xi$ and $\kappa\in\Rl$. Then
\begin{enumerate}
 \item[a)] $(F_{\xi,\kappa})^*=(F^*)_{\xi,\kappa}$.
 \item[b)] $F_{\xi,\kappa}F'_{\xi,\kappa}=(F\times_{\xi,\kappa}F')_{\xi,\kappa}$.
 \item[c)] If $[\tau^\xi_v(F),F']=0$ for all $v\in\Rl^2$, then $[F_{\xi,\kappa},F'_{\xi,-\kappa}]=0$.
 \item[d)] If $[Z\tau^\xi_v(F)Z^*,F']=0$ for all $v\in\Rl^2$, then $[ZF_{\xi,\kappa}Z^*,F'_{\xi,-\kappa}]=0$. 
 \item[e)] Let $X\in\BH$ be a unitary which commutes with $\U_\xi(v)$ for all $v\in\Rl^2$. Then 
$XF_{\xi,\kappa}X^{-1}=(XFX^{-1})_{\xi,\kappa}$ and $XF_{\xi,\kappa}X^{-1}$ is $\xi$-smooth.
 
\end{enumerate}
\end{lemma}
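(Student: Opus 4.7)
The plan is to verify that the warped-convolution arguments of \cite{BuchholzLechnerSummers10, DappiaggiLechnerMorfa-Morales10} carry over verbatim to the present $\Rl^2$-action $\tau^\xi$, which is automorphic, strongly continuous on $\F$, and implemented on $\HS$ by the strongly continuous unitary representation $\U_\xi$. All five items reduce to formal manipulations of the oscillatory integral (\ref{eq:WarpedConvolution}); since the strong-operator limit on $\HS^\infty$ is independent of the cutoff $\chi$, substitutions $v\mapsto\pm v$, $v'\mapsto\pm v'$ and rescalings of $\chi$ are harmless, and the smoothness of $F,F'\in\F^\infty_\xi$ together with that of vectors in $\HS^\infty$ supplies the a priori bounds needed to justify all Fubini-type exchanges after repeated integration by parts against the oscillating factor $e^{-ivv'}$.

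For (a), one takes the adjoint of (\ref{eq:WarpedConvolution}) inside the integral, uses that each $\tau^\xi_w$ is a $*$-automorphism and that $\U_\xi(v')^*=\U_\xi(-v')$, and substitutes $(v,v')\mapsto(-v,-v')$ to recover the warped convolution of $F^*$ (with a new cutoff that still satisfies the normalization at the origin). For (b) and (e), the covariance identity $\U_\xi(v)\tau^\xi_w(F)\U_\xi(v)^{-1}=\tau^\xi_{v+w}(F)$, which holds because $\U_\xi$ implements $\tau^\xi$, allows one to move $\U_\xi$-factors past $\tau^\xi$-transforms of operators. In (b) this converts the product of two oscillatory integrals into a single integral containing the Rieffel product $F\times_{\xi,\kappa}F'$ after performing the integration over the intermediate variable; in (e) it pushes the conjugation by $X$ inside the integrand, since $X$ commutes with $\U_\xi(v)$ and hence with each $\tau^\xi_v$, and $\xi$-smoothness of $XFX^{-1}$ follows from that of $F$ by the same commutation.

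For (c), one computes $[F_{\xi,\kappa},F'_{\xi,-\kappa}]$ as a double oscillatory integral. The hypothesis $[\tau^\xi_v(F),F']=0$ for all $v$ lets one commute the $\tau^\xi_{\kappa\theta v}(F)$-factor of $F_{\xi,\kappa}$ past $F'$; the residual $\U_\xi$-factors are absorbed into shifts of the $\tau^\xi$-arguments via the covariance identity, and after a change of variables the opposite signs of $\kappa$ produce two oscillatory kernels that cancel. Item (d) is the same computation with $F$ replaced by $ZFZ^{-1}$, using that $Z=(1-iY)/\sqrt{2}$ commutes with $\U_\xi(v)=U(\lambda_\xi(t))V(s)$ for every $v=(t,s)$, since the grading unitary $Y=V(h_0)$ commutes with $V$ (as $G$ is abelian) and with $U$ by the commutativity (\ref{eq:GaugeInvariance}) of $\sigma$ and $\alpha$.

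The main technical obstacle is not conceptual but analytic: rigorously justifying the exchange of limits, integrations, and algebraic operations inside the oscillatory integrals, and verifying independence of the cutoff $\chi$. This is precisely what the smooth domains $\F^\infty_\xi$ and $\HS^\infty$ were introduced for, and on them all required interchanges follow from the standard integration-by-parts trick against $e^{-ivv'}$, as carried out in \cite{Rieffel93, BuchholzLechnerSummers10}. Beyond these standard steps, the only de~Sitter-specific verification is that the chosen $\Rl^2$-action built from the boost group $\widetilde\Gamma_{W_0}$ and the $\Uone$ gauge transformations indeed satisfies the hypotheses of those references, which is immediate from the structure set up in Section \ref{subsec:FieldNetsOnDS}.
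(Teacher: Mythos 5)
Your proposal is correct and follows essentially the same route as the paper, which simply invokes the results of \cite{BuchholzLechnerSummers10} for a), b), c), e) and Lemma 3.2 of \cite{DappiaggiLechnerMorfa-Morales10} for d), the only model-specific input being exactly the point you identify: the twisting unitary $Z$ commutes with $\U_\xi(v)$ because $\gamma$ commutes with $\alpha$ and $\sigma$. Your additional sketches of the oscillatory-integral manipulations behind those cited results are accurate but not needed beyond verifying that the $\Rl^2$-action $\tau^\xi$ satisfies the hypotheses of the references.
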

\begin{proof}
 Statements a), b), c), e) were shown in \cite{BuchholzLechnerSummers10}. d) is a consequence of Lemma 3.2 in \cite{DappiaggiLechnerMorfa-Morales10} and the fact that $\gamma$ commutes with $\alpha$ and $\sigma$.
\end{proof}

The next lemma lists the transformation properties of warped operators under the de Sitter and gauge group.

\begin{lemma}
\label{lem:WCcovariance}
 Let $F\in\F^\infty_\xi$, $\kappa\in\Rl$, $g\in\tLo$ and $s\in\Uone$. Then
\begin{enumerate}
 \item[a)] $\alpha_g(F)$ is $g_*\xi$-smooth and 
\begin{equation}
\label{eq:DeformedOperatorsDeSitterCovariance}
 \alpha_g(F_{\xi,\kappa})=\alpha_g(F)_{g_*\xi,\kappa},
\end{equation}
where $g_*\xi$ is the push-forward of $\xi$ with respect to $g$.
\item[b)] $\sigma_{s}(F)$ is $\xi$-smooth and 
\begin{equation}
\label{eq:DeformedOperatorsGaugeCovariance}
 \sigma_{s}(F_{\xi,\kappa})=\sigma_{s}(F)_{\xi,\kappa}.
\end{equation}
\end{enumerate}
\end{lemma}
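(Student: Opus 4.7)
The entire covariance result rests on a single intertwining identity. For $g\in\tLo$, the wedge $gW_0$ carries the pushed-forward Killing field $g_*\xi$, and the remark after Lemma \ref{lem:WedgeInclusions} lifts \eqref{eq:GammaCovarianceL0} to the covering: $g\lambda_\xi(t)g^{-1}=\lambda_{g_*\xi}(t)$ for all $t\in\Rl$. Passing to the unitary implementer $U$ and using that $U$ and $V$ commute (assumption 2) gives
\begin{equation*}
\tau^{g_*\xi}_v\circ\alpha_g=\alpha_g\circ\tau^\xi_v,
\qquad
U(g)\,\U_\xi(v)\,U(g)^{-1}=\U_{g_*\xi}(v),\qquad v\in\Rl^2.
\end{equation*}
These are the only non-trivial inputs.

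For part a), I would first transfer smoothness: since $\alpha_g$ is isometric and $\tau^{g_*\xi}_v(\alpha_g(F))=\alpha_g(\tau^\xi_v(F))$ by the first intertwiner, the norm-smoothness of $v\mapsto \tau^\xi_v(F)$ implies that of $v\mapsto\tau^{g_*\xi}_v(\alpha_g(F))$, so $\alpha_g(F)\in\F^\infty_{g_*\xi}$. I would then apply $\alpha_g=\Ad U(g)$ to the defining integral \eqref{eq:WarpedConvolution}. Conjugation by a fixed unitary is strongly continuous on $\BH$, hence commutes with the $\eps\to 0$ strong limit and with the Bochner integration of the (compactly supported, norm-continuous) cutoff-regularised integrand. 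Factoring
\begin{equation*}
U(g)\,\tau^\xi_{\kappa\theta v}(F)\,\U_\xi(v')\,U(g)^{-1}
=\alpha_g\bl(\tau^\xi_{\kappa\theta v}(F)\br)\cdot U(g)\U_\xi(v')U(g)^{-1}
\end{equation*}
and applying the two intertwining relations turns the integrand into $\tau^{g_*\xi}_{\kappa\theta v}(\alpha_g(F))\,\U_{g_*\xi}(v')$, so the whole integral is $\alpha_g(F)_{g_*\xi,\kappa}$, which is \eqref{eq:DeformedOperatorsDeSitterCovariance}.

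For part b), I would obtain the claim as a direct corollary of Lemma \ref{lem:WCbasicproperties} e) with $X:=V(s)$. The unitary $V(s)$ commutes with $U(\lambda_\xi(t))$ by assumption 2), and with $V(s')$ by the abelianness of $\Uone$; hence $V(s)$ commutes with $\U_\xi(v)$ for every $v\in\Rl^2$, and the cited lemma yields $\sigma_s(F_{\xi,\kappa})=\sigma_s(F)_{\xi,\kappa}$. The $\xi$-smoothness of $\sigma_s(F)$ is immediate from $\tau^\xi_v\circ\sigma_s=\sigma_s\circ\tau^\xi_v$ (using that $\alpha$ commutes with $\sigma$ and that $\Uone$ is abelian) together with the isometry of $\sigma_s$.

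The only step that requires any care is the interchange of $\alpha_g$ with the oscillatory integral in part a). I do not expect this to be a genuine obstacle: the strong-convergence results of \cite{BuchholzLechnerSummers10} for the $\eps\to 0$ limit, combined with strong continuity of $\Ad U(g)$ and standard Bochner-integral arguments applied to the regularised integrand, justify the interchange in a routine way.
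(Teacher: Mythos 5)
Your proof is correct and takes essentially the same route as the paper: part b) is obtained exactly as in the text from Lemma \ref{lem:WCbasicproperties} e) with $X=V(s)$, and part a) is a self-contained unpacking of the external result (Lemma 3.3 a) of \cite{DappiaggiLechnerMorfa-Morales10}) that the paper merely cites, resting on the same inputs, namely the intertwining relation $g\lambda_\xi(t)g^{-1}=\lambda_{g_*\xi}(t)$ and the fact that $\alpha$ and $\sigma$ commute. The only point worth making explicit in your interchange argument is that $U(g)$ preserves the domain $\HS^\infty$ (being smooth vectors for the full group $\tLo\times\Uone$), which is what lets conjugation pass through the strong $\eps\to 0$ limit.
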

\begin{proof}
 Statement a) follows from Lemma 3.3 a) in \cite{DappiaggiLechnerMorfa-Morales10} and the fact that $\alpha$ and $\sigma$ commute. Statement b) follows from Lemma \ref{lem:WCbasicproperties} e). 
\end{proof}

Now we apply the warped convolution deformation method to a causal Borchers system $\F_0\subset\F$. Define
\begin{equation*}
 (\F_0)_{\xi,\kappa}:=\overline{\{F_{\xi,\kappa}:F\in\F_0\cap\F^\infty_\xi\}}^{\|\cdot\|}.
\end{equation*}
The following theorem shows that the inclusion $(\F_0)_{\xi,\kappa}\subset\F$ gives rise to a wedge-local field net in the sense of Proposition \ref{prop:fieldquadruple-fieldnet}.

\begin{theorem}
\label{thm:main}
 Let $(\F_0)_{\xi,\kappa}$ be as above. Then
\begin{enumerate}
 		\item[a)] $\alpha_g((\F_0)_{\xi,\kappa})= (\F_0)_{\xi,\kappa},\; g\in\tLo(W_0)$, 
		\item[b)] $\alpha_{\bj_{W_0}}((\F_0)_{\xi,\kappa})\subset ((\F_0)_{\xi,\kappa})^t{}'$,
		\item[c)] $\sigma_{s}((\F_0)_{\xi,\kappa})=(\F_0)_{\xi,\kappa},\; s\in \Uone$.
\end{enumerate}
\end{theorem}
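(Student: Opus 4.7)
The plan is to treat (c), (a), (b) in order of increasing difficulty.

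Part (c) follows immediately from Lemma~\ref{lem:WCcovariance}b): $\sigma_s(F_{\xi,\kappa}) = \sigma_s(F)_{\xi,\kappa}$, while $\sigma$ commutes with $\tau^\xi$ and hence preserves $\xi$-smoothness, and $\sigma_s(\F_0) = \F_0$ by Definition~\ref{def:CausalBorchersSystem}c). For (a), Lemma~\ref{lem:WCcovariance}a) gives $\alpha_g(F_{\xi,\kappa}) = \alpha_g(F)_{g_*\xi,\kappa}$. For $g \in \tLo(W_0)$, the covering-group covariance (\ref{eq:CovarianceGammaJ}) gives $g\lambda_\xi(t) g^{-1} = \lambda_\xi(t)$, so the flow of $\xi$ is preserved and $g_*\xi = \xi$. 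The invariance $\alpha_g(\F_0) = \F_0$ from Definition~\ref{def:CausalBorchersSystem}a), together with the preservation of $\xi$-smoothness by $\alpha_g$ (which again follows from the commutativity of $\alpha_g$ with $\tau^\xi$), closes the argument after passing to the norm closure.

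The main work is in (b). The first step is to combine Lemma~\ref{lem:WCcovariance}a) with $(\bj_{W_0})_*\xi = -\xi$ (from the second identity in (\ref{eq:CovarianceGammaJ})) to obtain $\alpha_{\bj_{W_0}}(F_{\xi,\kappa}) = \alpha_{\bj_{W_0}}(F)_{-\xi,\kappa}$. A short change-of-variables argument in the oscillatory integral (\ref{eq:WarpedConvolution}), based on $J\theta J = -\theta$ for $J = \mathrm{diag}(-1,1)$, yields $F_{-\xi,\kappa} = F_{\xi,-\kappa}$, so that
\[
\alpha_{\bj_{W_0}}(F_{\xi,\kappa}) = \alpha_{\bj_{W_0}}(F)_{\xi,-\kappa}, \qquad F \in \F_0 \cap \F^\infty_\xi.
\]
The second step is to show that, for all $F, F' \in \F_0 \cap \F^\infty_\xi$, $\bigl[\alpha_{\bj_{W_0}}(F)_{\xi,-\kappa},\, Z F'_{\xi,\kappa} Z^{-1}\bigr] = 0$. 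This is obtained from Lemma~\ref{lem:WCbasicproperties}d) (applied with the roles of $F, F'$ interchanged and $\kappa$ negated). Its hypothesis, $[Z \tau^\xi_v(F') Z^{-1},\, \alpha_{\bj_{W_0}}(F)] = 0$ for all $v \in \Rl^2$, is verified from three facts: $\tau^\xi$ preserves $\F_0$ because $\widetilde{\Gamma}_{W_0} \times \Uone$ lies in the stabilizer; $Z$ commutes with $\U_\xi(v)$ because the grading unitary $Y$ commutes with both $U$ and $V$ (by the commutativity of $\gamma$ with $\alpha$ and $\sigma$), whence $Z \tau^\xi_v(F') Z^{-1} \in (\F_0)^t$; and $\alpha_{\bj_{W_0}}(F) \in (\F_0)^t{}'$ by Definition~\ref{def:CausalBorchersSystem}b), so the two commute. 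Combining with the identity of the previous step and passing to norm closures in $F, F'$ establishes (b).

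I expect the main obstacle to be the sign bookkeeping: the reflection $\bj_{W_0}$ reverses $\xi$, which translates, via the change of variables in the warping, into the sign flip $\kappa \mapsto -\kappa$. It is precisely this flip that makes the twisted commutator of $\alpha_{\bj_{W_0}}(F)_{\xi,\kappa}$ with $F'_{\xi,\kappa}$ vanish at the same $\kappa$, thereby allowing (b) to close onto itself; once this geometric picture is clearly disentangled from the $Z$-twist, the rest is a routine application of the covariance and commutation lemmas already at our disposal.
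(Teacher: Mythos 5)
Your proof is correct and follows essentially the same route as the paper's: parts (a) and (c) via Lemma \ref{lem:WCcovariance} and the invariance properties of $\F_0$, and part (b) via the sign flip $\alpha_{\bj_{W_0}}(F_{\xi,\kappa})=\alpha_{\bj_{W_0}}(F)_{\xi,-\kappa}$ combined with the twisted commutation statement of Lemma \ref{lem:WCbasicproperties}d). One cosmetic remark: the commutator you want follows from Lemma \ref{lem:WCbasicproperties}d) with the roles of $F,F'$ interchanged but with $\kappa$ \emph{unchanged}; negating $\kappa$ there would pair the deformation parameters the wrong way round, though the correct instantiation is exactly the one your argument needs.
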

\begin{proof}
	a): Let $F\in\F_0\cap\F^\infty_\xi$ and $g\in\tLo(W_0)$. From (\ref{eq:CovarianceGammaJ}) follows that $g$ commutes with each $\lambda_\xi(t),t\in\Rl$. Hence 
	\begin{equation*}
	    \alpha_g(F_{\xi,\kappa})=\alpha_g(F)_{g_*\xi,\kappa}=\alpha_g(F)_{\xi,\kappa}
	\end{equation*}
	by Lemma \ref{lem:WCcovariance} a) and $\alpha_g(F)\in \F_0$ by property a) of the undeformed causal Borchers system. Therefore $\alpha_g(F_{\xi,\kappa})\in(\F_0)_{\xi,\kappa}$ and by taking the norm-closure of $\{F_{\xi,\kappa}:F\in\F_0\cap\F^\infty_\xi\}$ the statement $\alpha_g((\F_0)_{\xi,\kappa})=(\F_0)_{\xi,\kappa}$ follows.

	b): From Lemma \ref{lem:WCcovariance} a) and (\ref{eq:CovarianceGammaJ}) follows
	\begin{equation}
	\label{eq:proof_locality}
	    \alpha_{{\bj}_{W_0}}(F_{\xi,\kappa})=\alpha_{{\bj}_{W_0}}(F)_{{{\bj}_{W_0}}_*\xi,\kappa}=\alpha_{{\bj}_{W_0}}(F)_{\xi,-\kappa},
	\end{equation}
	together with an elementary substitution in (\ref{eq:WarpedConvolution}). We have $\alpha_{{\bj}_{W_0}}(F)\in(\F_0)^t{}'$ by property b) of the undeformed causal Borchers system, {\it i.e.}, $[Z\alpha_{{\bj}_{W_0}}(F)Z^{-1},F']=0$ for all $F'\in\F_0$. Pick some $F'\in\F_0\cap\F^\infty_\xi$ and consider its warped convolution $F'_{\xi,\kappa}$. We have $[Z\tau_v^\xi(\alpha_{{\bj}_{W_0}}(F))Z^{-1},F']=0$ for all $v\in\Rl^2$ since $\F_0$ is invariant under $\widetilde{\Gamma}_{W_0}\times\Uone$. Hence 
	\begin{equation*}
	  [Z\alpha_{\bj_{W_0}}(F_{\xi,\kappa})Z^{-1},F'_{\xi,\kappa}]=
	  \alpha_{\bj_{W_0}}([ZF_{\xi,\kappa}Z^{-1},\alpha_{\bj_{W_0}}(F'_{\xi,\kappa})])=
	  \alpha_{\bj_{W_0}}([ZF_{\xi,\kappa}Z^{-1},F'_{\xi,-\kappa}])=0
	\end{equation*}
	by (\ref{eq:proof_locality}) and Lemma \ref{lem:WCbasicproperties} e). By taking the norm-closure of $\{F_{\xi,\kappa}:F\in\F_0\cap\F^\infty_\xi\}$ the statement $\alpha_{\bj_{W_0}}((\F_0)_{\xi,\kappa})\subset ((\F_0)_{\xi,\kappa})^t{}'$ follows.

	Assertion c) is a consequence of Lemma \ref{lem:WCcovariance} b) and the invariance $\F_0$ under gauge transformations.
\end{proof}

\begin{remark}
 Note that the minus sign which appears in (\ref{eq:proof_locality}) is the main reason why the locality proof works. That this argument is also valid for the extended symmetry group $\L_+$ can be seen in the following way. The reflection $\widehat{j}_{W_0}(x^0,x^1,\vec{x})=(-x^0,-x^1,\vec{x})$ commutes with boosts in the $x^1$-direction. Again, a deformed operator transforms under the lift $\widehat{\bj}_{W_0}$ of $\hat{j}_{W_0}$ according to $\alpha_{\widehat{\bj}_{W_0}}(F_{\xi,\kappa})=F_{\xi,-\kappa}$ since $\widehat{\bj}_{W_0}$ is represented by an {\em anti}unitary operator.
\end{remark}

\subsection{Example: Deformations of $\CAR$-nets} 
\label{subsec:CARnets}
Now we investigate a particular class of wedge-local field nets in more detail, namely, nets of $\CAR$-algebras. The free charged Dirac field is an example thereof. After it is shown that these models fit into the framework of Section \ref{subsec:DeformationsFieldNets}, the properties of the deformed field operators and observables are studied in detail and it is proved that the deformed and undeformed nets are non-isomorphic.

\subsubsection{The selfdual $\CAR$-algebra}
We use Araki's selfdual approach to the $\CAR$-algebra \cite{Araki71}. Let $H$ be a separable infinite-dimensional complex Hilbert space with inner product $\left<.\,,.\right>$ and let $C$ be an antiunitary involution on $H$, {\it i.e.}, $C^2=1$ and $\left<C f_1,C f_2\right>=\left<f_2,f_1\right>$ for all $f_1,f_2\in H$. 
On the $*$-algebra $\CAR_0(H,C)$ which is algebraically generated by elements $B(f),f\in H$ and a unit $1$, satisfying
\begin{itemize}
 \item[$a)$] $f\mapsto B(f)$ is complex linear,
 \item[$b)$] $B(f)^*=B(C f)$,
 \item[$c)$] $\{B(f_1),B(f_2)\}=\left<C f_1,f_2\right>\cdot 1$,
\end{itemize}
there exists a unique C$^*$-norm satisfying (see \cite{EvansKawahigashi98})
\begin{equation*}
	\|B(f)\|^2=\frac{1}{2}(\|f\|^2+\sqrt{\|f\|^4-|\left<f,Cf\right>|^2}).
\end{equation*}
Hence each $B(f)$ is bounded and $f\mapsto B(f)$ is norm-continuous. The C$^*$-completion of $\CAR_0(H,C)$ is denoted by $\CAR(H,C)$. This C$^*$-algebra is simple \cite{Araki71}, so all its representations are faithful or trivial. 

If $u$ is a unitary on $H$ which commutes $C$, then $\alpha_u(B(f)):=B(uf)$ defines a $*$-automorphism on $\CAR(H,C)$. We refer to $u$ as Bogolyubov transformation and to $\alpha_u$ as Bogolyubov automorphism.
\subsubsection{Quasifree representations}
A state $\omega$ on $\CAR(H,C)$ is called quasifree, if
\begin{align*}
	\omega(B(f_1)\cdots B(f_{2n+1}))&=0\\
	\omega(B(f_1)\cdots B(f_{2n}))
	&=(-1)^{n(n-1)/2}\sum_{\epsilon}\mathrm{sgn}(\epsilon)\prod_{j=1}^n\omega(B(f_{\epsilon(j)})B(f_{\epsilon(j+n)}))
\end{align*}
holds for all $n\in\Nl$, where the sum runs over all permutations $\epsilon$ of $\{1,\dots,2n\}$ satisfying
\begin{equation*}
	\epsilon(1)<\dots <\epsilon(n),\quad \epsilon(j)<\epsilon(j+n),\quad j=1,\dots,n.
\end{equation*}
Let $S$ be a bounded linear operator on $H$ satisfying
\begin{equation*}
	\label{eq:CAR_Soperator}
	S=S^*,\quad 0\le S\le 1,\quad CSC=1-S.
\end{equation*}
In \cite[Lemma 3.3]{Araki71} it is shown that for every such $S$ there exists a unique quasifree state $\omega_S$ satisfying 
\begin{equation*}
\label{eq:QuasifreeState}
	\omega_S(B(f)B(g))=\left<Cf,Sg\right>.
\end{equation*}
Conversely, every quasifree state on $\CAR(H,C)$ gives rise to such an operator \cite[Lemma 3.2]{Araki71}. Hence quasifree states can be parametrized by this class of operators. 

Let $\omega_S$ be a quasifree state. For the GNS-triple associated with $(\CAR(H,C),\omega_S)$ we write $(\HS_S,\pi_S,\Omega_S)$. If a Bogolyubov transformation $u$ commutes with $S$, then the associated Bogolyubov automorphism can be unitarily implemented, {\em i.e.}, there exists a unitary operator $U_S$ on $\HS_S$, such that
\begin{equation*}
	\pi_S(\alpha_u(F))=U_S\pi_S(F)U_S^{-1},\quad U_S\Omega_S=\Omega_S
\end{equation*}
holds for all $F\in\CAR(H,C)$ (see \cite[Lemma 4.2]{Araki71}).

Fock states are a particular class of quasifree states where $S=P$ is a projection. The GNS Hilbert space $\HS_P$ is the Fermionic Fock space over $PH$
\begin{equation*}
	\HS_P=\Cl\oplus \bigoplus_{n\ge 1}\wedge^n PH,
\end{equation*}
where $\wedge^n PH$ denotes the antisymmetrization of the $n$-fold tensor product of $PH$, the cyclic vector $\Omega_P$ is the Fock vacuum in $\HS_P$ and
\begin{equation*}
 \pi_P(B(f))=a^*(PC f)+a(Pf),
\end{equation*}
with the standard Fermi creation and annihilation operators $a^\#(Pf)$ on $\HS_P$.
Two representations $(\HS_P,\pi_P)$ and $(\HS_{P'},\pi_{P'})$ are unitarily equivalent if and only if $P-P'$ is Hilbert-Schmidt (see \cite{Araki71} and \cite{ShaleStinespring64}). As a consequence, a Bogolyubov transformation $u$ is unitarily implementable if and only if $[u,P]$ is Hilbert-Schmidt.

\subsubsection{Nets of $\CAR$-algebras}
In order to introduce charges and global gauge transformations we double the Hilbert space $\hs:=H\oplus H$ and define the antiunitary involution
\begin{equation*}
 \cc:=
\begin{pmatrix}
 0 & C\\
 C & 0
\end{pmatrix}.
\end{equation*}
For the inner product on $\hs$ we write $(.\,,.)$. Applying Araki's construction to $(\hs,\cc)$ yields again a C$^*$-algebra $\CAR(\hs,\cc)$. The unitary operators
\begin{equation}
\label{eq:CAR_GaugeTransformationGeneral}
	v(s)(f_+\oplus f_-):=e^{is}f_+\oplus e^{-is}f_-,\quad s\in\Rl,\;f_\pm\in H
\end{equation}
commute with $\cc$ so there exists a representation $\sigma:\Uone\ra \Aut(\CAR(\hs,\cc))$, such that
\begin{equation}
\label{eq:CAR_gaugetransformation}
 \sigma_{s}(B(f))=B(v(s)f).
\end{equation}
We assume that there exists a unitary representation $u$ of $\tLo$ on $\hs$ which commutes with $\cc$ so that there exists a representation $\alpha:\tLo\ra\Aut(\CAR(\hs,\cc))$ satisfying
\begin{equation*}
\label{eq:CAR_covariance}
 \alpha_g(B(f))=B(u(g)f).
\end{equation*}
For the representers of the subgroup $\widetilde{\Gamma}_{W_0}$ we write $u_\xi(t):=u(\lambda_\xi(t)),\, t\in\Rl$.
\begin{remark}
 The picture in terms of spinors and cospinors is obtained by setting 
\begin{equation}
\label{eq:CAR_cospinors}
 \Psi(f_-):=B(0\oplus f_-),\quad \Psi^\dagger(f_+):=B(f_+\oplus 0).
\end{equation}
There holds $\Psi(f_-)^*=\Psi^\dagger(Cf_-), \Psi^\dagger(f_+)^*=\Psi(Cf_+)$ and from the linearity of $f\mapsto B(f)$ follows that (co)spinors transform according to
\begin{equation}
\label{eq:CAR_cospinors_gauge}
 \sigma_s(\Psi(f_-))=e^{-is}\Psi(f_-),\quad \sigma_s(\Psi^\dagger(f_+))=e^{is}\Psi^\dagger(f_+)
\end{equation}
under gauge transformations.
\end{remark}

Now we come to the net structure of the theory. Let $\hs_0\subset \hs$ be a complex linear subspace satisfying $\cc \hs_0\subset \hs_0$ and
\begin{itemize}
 \item[$i)$] $u(g)\hs_0=\hs_0,\;g\in\tLo(W_0)$,
 \item[$ii)$] $u(\bj_{W_0})\hs_0\subset (\hs_0)^\perp$,
 \item[$iii)$] $v(s)\hs_0=\hs_0,\; s\in\Rl$,
\end{itemize}
where $(\hs_0)^\perp$ is the orthogonal complement of $\hs_0$. 
\begin{remark}
In concrete models this space is explicitly given and can be constructed by different methods. In the case of the free charged Dirac field the space $\hs_0$ can be defined as the set of (Fourier-Helgason transforms of) spinor-valued testfunctions on $M$ which are localized in the wedge $W_0$ (see \cite{BartesaghiGazeauMoschellaTakook} and \cite{BrosMoschella95} for the scalar free field case) or one considers smooth sections of the Dirac bundle over $M$ modulo the kernel of the causal propagator which is associated with the Dirac equation \cite{Dimock82}, \cite{Sanders2010}. Since this space is constructed from testfunctions it is clear that conditions $i),ii)$ and $iii)$ are satisfied.  
\end{remark}

It is an easy exercise to show that the above conditions imply that 
\begin{equation*}
 W:=gW_0\mapsto u(g)\hs_0=:\hs(W)
\end{equation*}
is an isotonous, $\tLo$-covariant, wedge-local and gauge-invariant net of complex Hilbert spaces in the sense of \cite{BaumgaertelJurkeLledo94}. Hence it is an immediate consequence of Araki's construction that 
\begin{equation}
\label{eq:CARnet}
	W\mapsto \CAR(\hs(W),\cc)=:\F(W)\subset\F:=\CAR(\hs,\cc)
\end{equation}
is a wedge-local field net. Equivalently, from conditions $i)$, $ii)$ and $iii)$ follows that the inclusion $\CAR(\hs_0,\cc)=:\F_0\subset\F=\CAR(\hs,\cc)$ satisfies
\begin{itemize}
 \item[$-$] $\alpha_g(\F_0)=\F_0,\; g\in\tLo(W_0)$
 \item[$-$] $\alpha_{\bj_{W_0}}(\F_0)\subset (\F_0)^t{}'$
 \item[$-$] $\sigma_s(\F_0)=\F_0,\; s\in\Rl$
\end{itemize}
and $gW_0\mapsto \alpha_g(\F_0)$ defines a wedge-local field net by Proposition \ref{prop:fieldquadruple-fieldnet}) which coincides with (\ref{eq:CARnet}). 

\begin{remark}
    Observables in this net  are polynomials of $\Psi(f_-)\Psi^\dagger(f_+)$ which are manifestly gauge-invariant. The quasilocal algebra, generated by them, is denoted by $\A$. 
\end{remark}

From the above discussion it is clear that $W\mapsto \F(W)$ complies with the general assumptions of Section \ref{subsec:FieldNetsOnDS}. As we mentioned before, the algebra $\F$ contains a norm-dense $*$-subalgebra of smooth elements $\F^\infty_\xi$. These can be constructed by smoothening out any element $F\in \F$ with a smooth and compactly supported function $f\in C^\infty_0(\widetilde{\Gamma}_{W_0}\times \Uone)$ via
\begin{equation*}
 F_f:=\int_{\widetilde{\Gamma}_{W_0}\times \Uone}d(g,h)\tau^\xi_{g,h}(F)f(g,h)
\end{equation*}
where $d(g,h)$ is the left-invariant Haar measure on $\widetilde{\Gamma}_{W_0}\times \Uone$. By choosing sequences of functions $f_n$ which converges to the Dirac delta measure at the identity of $\widetilde{\Gamma}_{W_0}\times \Uone$ one sees that these elements are dense in $\F$ in the norm topology. Since the subalgebra $\F_0$ is invariant under the action $\tau^\xi$ it also contains a norm-dense $*$-subalgebra of smooth elements.
For the warped convolution $gW_0\mapsto \alpha_g((\F_0)_{\xi,\kappa})$ of a net of $\CAR$-algebras we will use the shorthand notation $\F_\kappa$.

\begin{remark}
	As we mentioned before, the free charged Dirac field provides an explicit example of a wedge-local field net of $\CAR$-algebras. For spin 1/2 fields there exists a unique de Sitter-invariant state with the Hadamard property \cite{AllenJacobson86}, \cite{AllenLuetken86}. It is the analogue of the Bunch-Davies state \cite{Allen85} in the the spin 1/2 case. The Dirac field in this representation was studied in \cite{BartesaghiGazeauMoschellaTakook} and it was shown that it satisfies the so-called ``geometric KMS-condition''. By the same methods as in \cite{BorchersBuchholz99} one can prove that this condition implies the Reeh-Schlieder property of the state.
\end{remark}

\subsubsection{Deformation fix-points for observables}
Let $\F$ be a $\CAR$-net over $(\hs,\cc)$ in a quasifree representation $(\HS_S,\pi_S,\Omega_S)$ of a de Sitter- and gauge-invariant state. 
\begin{remark}
	As we mentioned before, all representations of the $\CAR$-algebra are faithful so will omit the $S$-dependence in our notation from now on. 
\end{remark}

\noindent
For the implementing operators we write
\begin{equation*}
	\pi(\alpha_g(B(f)))=U(g)\pi(B(f))U(g)^{-1},\qquad
	\pi(\sigma_s(B(f)))=V(s)\pi(B(f))V(s)^{-1}.
\end{equation*}
As $\alpha$ and $\sigma$ are strongly continuous, the representations $U$ and $V$ are also strongly continuous. Stone's theorem implies that the one-parameter group $\{V(s):s\in\Rl\}$ has a unique self-adjoint generator $Q$ with spectrum $\S\subset\Zl$ since $V(2\pi)=1$. Hence the representation space $\HS$ is $\S$-graded (charged sectors)
\begin{equation}
	\label{eq:QuasifreeDecomposition}
 \HS=\bigoplus_{n\in \cal{S}}\HS_n,\quad \HS_n=\{\Phi\in\HS:Q\Phi=n\Phi\}.
\end{equation}
From the transformation properties (\ref{eq:CAR_cospinors_gauge}) for (co)spinors follows that $\pi(\Psi(f_-))$ decreases charges by one and $\pi(\Psi^\dagger(f_+))$ increases charges by one, {\em i.e.}, 
\begin{equation*}
    \pi(\Psi(f_-))\HS_n\subset \HS_{n-1},\quad 
    \pi(\Psi^\dagger(f_+))\HS_n\subset \HS_{n+1}
\end{equation*}
In the following we will frequently use the spectral decomposition $V(s)=\sum_{n\in \cal{S}}e^{isn}E(n)$, where $E(n)$ is the projector onto the eigenspace $\HS_n$ of $Q$. 

Before we determine the fix-points of the deformation map for observables, we compute the warped convolution for intertwiners between charged sectors.
\begin{proposition}
	\label{prop:CARDeformation}
 Let $\pi(F)\in\BH$ be $\xi$-smooth such that $\pi(F)\HS_n\subset\HS_{n+m}$. Then
\begin{equation*}
 \pi(F)_{\xi,\kappa}=\sum_{n\in \cal{S}}U_\xi(\kappa n)\pi(F)U_\xi(-\kappa(n+m))E(n).
\end{equation*}
\end{proposition}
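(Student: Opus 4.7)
My plan is to evaluate (\ref{eq:WarpedConvolution}) by exploiting the intertwining property of $\pi(F)$ together with the discreteness of the gauge spectrum $\cal{S}$. The hypothesis $\pi(F)\HS_n\subset\HS_{n+m}$ is equivalent to $\pi(F)$ carrying definite gauge charge $m$, that is, $\sigma_s(\pi(F))=V(s)\pi(F)V(s)^{-1}=e^{ism}\pi(F)$ for all $s\in\Rl$. Writing $v=(t,s)$ and noting $\theta v=(s,-t)$, this yields
\begin{equation*}
    \tau^\xi_{\kappa\theta v}(\pi(F))=\alpha_{\lambda_\xi(\kappa s)}\bl(\sigma_{-\kappa t}(\pi(F))\br)=e^{-i\kappa tm}\,U_\xi(\kappa s)\,\pi(F)\,U_\xi(-\kappa s),
\end{equation*}
so the deformed action factorizes into a scalar phase in $t$ and a boost conjugation depending only on $s$.

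Substituting this and $\U_\xi(t',s')=U_\xi(t')V(s')$ into (\ref{eq:WarpedConvolution}), I would insert the spectral resolution $V(s')=\sum_{n\in\cal{S}}e^{is'n}E(n)$ and evaluate the four Fourier integrals on a smooth vector $\Phi\in\HS^\infty$. The $s'$-integration against $e^{-iss'}$ produces $2\pi\delta(s-n)$, collapsing the $s$-integral to a sum over $n\in\cal{S}$ and fixing $s=n$ in the boost conjugation. The remaining $t$-integration combines $e^{-itt'}$ with $e^{-i\kappa tm}$ into $2\pi\delta(t'+\kappa m)$, so the $t'$-integration evaluates $U_\xi(t')$ at $t'=-\kappa m$. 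After collecting the prefactors and using $U_\xi(-\kappa n)U_\xi(-\kappa m)=U_\xi(-\kappa(n+m))$, the formula in the proposition drops out.

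The main technical point is justifying these Fourier manipulations within the oscillatory integral framework. I would keep the cutoff $\chi$ throughout, test against $\Phi\in\HS^\infty$, and use smoothness of $\Phi$ under the gauge action (so that the family $\{E(n)\Phi\}_n$ is rapidly decreasing) to control the sum over $n$ by dominated convergence once the delta functions have been extracted. A more streamlined alternative, already provided in \cite{BuchholzLechnerSummers10}, is the spectral identity $\pi(F)_{\xi,\kappa}\Phi=\int\tau^\xi_{\kappa\theta p}(\pi(F))\,dE_\xi(p)\Phi$, where the joint spectral measure of $\U_\xi$ factorizes as $dE_P(p_1)\,E(p_2)$ by commutativity of $\alpha$ and $\sigma$, with $dE_P$ the spectral measure of the boost generator. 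The $p_1$-integral then collapses via $\int e^{-i\kappa mp_1}\,dE_P(p_1)=U_\xi(-\kappa m)$ and the sum over $p_2\in\cal{S}$ gives the stated result.
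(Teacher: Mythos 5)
Your argument is correct and follows essentially the same route as the paper's proof: express the charge-shift hypothesis as $\sigma_s(\pi(F))=e^{ism}\pi(F)$, insert the spectral resolution of $V(s')$, and extract the factors $2\pi\delta(s-n)$ and $2\pi\delta(t'+\kappa m)$ from the regularized oscillatory integrals with a factorized cutoff. The alternative via the spectral identity of \cite{BuchholzLechnerSummers10} is a legitimate shortcut, but your main computation is the one the paper carries out.
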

\begin{proof}
 Let $\Phi\in\HS^\infty_\xi$. Then
\begin{align*}
 \pi(F)_{\xi,\kappa}\Phi
&=\pi(F)_{\xi,\kappa}\sum_{n\in \cal{S}}E(n)\Phi\\
&=\sum_{n\in \cal{S}}\pi(F)_{\xi,\kappa}E(n)\Phi\\
&=\frac{1}{4\pi^2}\sum_{n\in \cal{S}}\lim_{\eps\ra 0}\int dv \int dv'e^{-ivv'}\chi(\eps v,\eps v')\U_\xi(\kappa\theta v)\pi(F)\U_\xi(-\kappa\theta v)^{-1}\U_\xi(v')E(n)\Phi\\
&=\frac{1}{4\pi^2}\sum_{n\in \cal{S}}\lim_{\eps \ra 0}\int dt ds \int dt' ds' e^{-i(tt'+ss')}
\chi_1(\eps t,\eps t')\chi_2(\eps s,\eps s')\cdot\\
&\hspace{4cm}\cdot
U_\xi(\kappa s)V(-\kappa t)\pi(F)V(-\kappa t)^{-1}U_\xi(\kappa s)^{-1}U_\xi(t')V(s')E(n)\Phi\\
&=\frac{1}{4\pi^2}\sum_{n\in \cal{S}}\lim_{\eps_1\ra 0}\int dt dt' \lim_{\eps_2\ra 0}\int ds ds' e^{-i(tt'+ss')}
\chi_1(\eps_1 t,\eps_1t')\chi_2(\eps_2 s,\eps_2s')\cdot\\
&\hspace{4cm}\cdot
U_\xi(\kappa s)V(-\kappa t)\pi(F)V(-\kappa t)^{-1}U_\xi(\kappa s)^{-1}U_\xi(t')V(s')E(n)\Phi\\
&=\frac{1}{4\pi^2}\sum_{n\in \cal{S}}\lim_{\eps_1\ra 0}\int dt dt' \lim_{\eps_2\ra 0}\int ds ds' e^{-i(tt'+ss')}
\chi_1(\eps_1 t,\eps_1t')\chi_2(\eps_2 s,\eps_2s')\cdot\\
&\hspace{4cm}\cdot
U_\xi(\kappa s)e^{-i\kappa tm}\pi(F) U_\xi(\kappa s)^{-1}U_\xi(t')e^{is'n}E(n)\Phi\\
&=\frac{1}{4\pi^2}\sum_{n\in \cal{S}}\lim_{\eps_1\ra 0}\int dt dt' \lim_{\eps_2\ra 0}\int ds ds' e^{-it(t'+\kappa m)}e^{-is'(s-n)}
\chi_1(\eps_1 t,\eps_1t')\chi_2(\eps_2 s,\eps_2s')\cdot\\
&\hspace{4cm}\cdot
U_\xi(\kappa s)\pi(F) U_\xi(\kappa s)^{-1}U_\xi(t')E(n)\Phi\\
&=\frac{1}{2\pi}\sum_{n\in \cal{S}}\lim_{\eps_1\ra 0}\int dt dt' e^{-it(t'+\kappa m)}
\chi_1(\eps_1 t,\eps_1t') U_\xi(\kappa n)\pi(F) U_\xi(\kappa n)^{-1}U_\xi(t')E(n)\Phi\\
&=\sum_{n\in \cal{S}}U_\xi(\kappa n)\pi(F) U_\xi(\kappa n)^{-1}U_\xi(-\kappa m)E(n)\Phi.
\end{align*}
In the first line we used the strong convergence of $\sum_{n\in \cal{S}}E(n)$ to the identity and the continuity of $\pi(F)_{\xi,\kappa}$ as an operator on $\HS$ for the second equality. Since the definition of the warped convolution (\ref{eq:WarpedConvolution}) does not depend on the cut-off function $\chi$ we choose $\chi(t,s,t',s')=\chi_1(t,t')\chi_2(s,s')$ with $\chi_l\in C^\infty_0(\Rl\times \Rl)$, $\chi_l(0,0)=1$, $l=1,2$.
For the fifth equality we use Fubini and regularize the integrals in the variables $s,s'$ and $t,t'$ separately by introducing cutoffs $\eps_1,\eps_2$ (see \cite{Rieffel93}). The behavior of $\pi(F)$ under gauge transformations and $V(s')E(n)=e^{is'n}E(n)$ is used in the sixth line. After that the $s'$-integration is performed and the Fourier transform of $\chi_2$ yields a factor $2\pi\delta(s-n)$ in the limit $\eps_2\ra 0$ since $\chi_2(0,0)=1$. Similarly we obtain a factor $2\pi\delta(t'+\kappa m)$ in the limit $\eps_1\ra 0$.
\end{proof}

\begin{remark}
    Specializing this proposition to $m=0$ yields the warped convolution for observables and $m=\pm1$ for (co)spinors.
\end{remark}

\noindent
\\
Next we determine the fix-points of the map $\pi(A)\mapsto \pi(A)_{\xi,\kappa}$ for observables. For this purpose we need some basic facts about one-parameter unitary groups. The unitary operators $\{U_\xi(t):t\in\Rl\}$ form a strongly continuous one-parameter group and by Stone's theorem there exists a unique selfadjoint and (in general) unbounded operator $K_\xi$ (the generator the group) which is defined as
\begin{equation}
\label{eq:DefinitionGenerator}
	iK_\xi\Phi=\lim_{t\ra 0}\frac{1}{t}\bl(U_\xi(t)\Phi-\Phi)
\end{equation} 
on the dense domain $D(K_\xi)=\{\Phi\in \HS:\lim_{t\ra 0}\bl(U_\xi(t)\Phi-\Phi)/t\text{ exists}\}$. 
For elements in $D(K_\xi)$ where $t\mapsto U_\xi(t)\Phi$ is smooth in $\|\cdot \|_{\HS}$ we write $D(K_\xi)^\infty$. Note that $D(K_\xi)^\infty$ is dense in $\HS$ and $\Phi\in D(K_\xi)^\infty$ if and only if $\Phi\in D((K_\xi)^l)$ for all $l\ge 1$. If an operator commutes with $U_\xi(t)$ for all $t\in\Rl$, then $D(K_\xi)^\infty$ is invariant under its action. In particular we have
\begin{equation}
\label{eq:InvarianceDomain}
	U_\xi(t)D(K_\xi)^\infty\subset D(K_\xi)^\infty,\;t\in\Rl,\qquad E(n)D(K_\xi)^\infty\subset D(K_\xi)^\infty,\; n\in\S.
\end{equation}
Furthermore, for $F\in \F^\infty_\xi$ there holds $\pi(F)D(K_\xi)^\infty\subset D(K_\xi)^\infty$ since $F$ is smooth with respect to boosts.

Since observables $A\in\F$ are gauge-invariant, it follows that they are diagonal with respect to the orthogonal decomposition (\ref{eq:QuasifreeDecomposition}) of the representation space:
\begin{equation*}
	\pi(A)=\bigoplus_{n\in \S}\pi_n(A), \qquad \pi_n(A)=\pi(A)E(n):\HS_n\ra \HS_n.
\end{equation*} 		
Furthermore, since observables commute with gauge unitaries and leave charged sectors invariant, it follows that each $\pi_n:\A\subset\F\ra\cal{B}(\HS_n)$ is a representation of the quasilocal algebra $\A$. As all representations of the $\CAR$-algebra are faithful, each $\pi_n,n\in\S$ is faithful. So if $\pi_n(A)=0$ for some $n\in \S$, then $A=0$, which implies $\pi_{m}(A)=0$ for all $m\in \S$ by linearity.

Obviously, from Proposition \ref{prop:CARDeformation} follows that $\pi(A)$ is invariant under the deformation if $\pi_n(A)=1$ for all $n\neq 0$, since $\sum_{n\in S}E(n)$ converges strongly to the identity.

\begin{proposition}
	Let $\A$ be the net of observables in a $\mathrm{CAR}$-net $\F$ in a quasifree representation of a de Sitter- and gauge-invariant state with Reeh-Schlieder property.
	Let $A\in\A(W_0)$ be a $\xi$-smooth observable. If there exists an $\veps\in\Rl$ such that $\pi(A)_{\xi,\kappa}=\pi(A)$ for all $|\kappa|<\veps$, then $\pi(A)\in\Cl\cdot 1$.
\end{proposition}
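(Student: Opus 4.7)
My plan is to extract from the fixed-point hypothesis enough boost-invariance of $A$ to force $\pi(A)\Omega=c\Omega$, and then to invoke Reeh-Schlieder to conclude $\pi(A)=c\cdot 1$. Since $A$ is an observable it commutes with each gauge projector $E(n)$, so Proposition \ref{prop:CARDeformation} with $m=0$ specialises to
$$\pi(A)_{\xi,\kappa}=\sum_{n\in\S}U_\xi(\kappa n)\,\pi(A)\,U_\xi(-\kappa n)\,E(n).$$
Restricted to the sector $\HS_n$, the hypothesis $\pi(A)_{\xi,\kappa}=\pi(A)$ becomes $U_\xi(\kappa n)\pi_n(A)U_\xi(-\kappa n)=\pi_n(A)$ for all $|\kappa|<\veps$. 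For any fixed $n\in\S\setminus\{0\}$ (non-empty in charged theories such as the Dirac field), the set $\{\kappa n:|\kappa|<\veps\}$ is a neighbourhood of $0$ in $\Rl$, and composition inside the one-parameter group $\{U_\xi(t)\}_{t\in\Rl}$ propagates this to $[U_\xi(t),\pi_n(A)]=0$ for every $t\in\Rl$.

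The next step is to lift the sector-wise commutation to an identity in $\A$ itself. Since $\lambda_\xi(t)\in\tLo(W_0)$, the element $B_t:=\alpha_{\lambda_\xi(t)}(A)-A$ still lies in $\A(W_0)$, and the commutation just obtained gives $\pi_n(B_t)=U_\xi(t)\pi_n(A)U_\xi(-t)-\pi_n(A)=0$ for every $n\neq 0$. Faithfulness of $\pi_{n_0}$ for any single $n_0\in\S\setminus\{0\}$, as established in the paragraph preceding Proposition \ref{prop:CARDeformation}, then forces $B_t=0$, hence $\alpha_{\lambda_\xi(t)}(A)=A$ in $\A$ for all $t\in\Rl$.

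De Sitter invariance of the state gives $U_\xi(t)\Omega=\Omega$, so by the just-derived $\alpha$-invariance of $A$ the vector $\pi(A)\Omega$ is fixed by the entire boost group: $U_\xi(t)\pi(A)\Omega=\pi(\alpha_{\lambda_\xi(t)}(A))\Omega=\pi(A)\Omega$. Uniqueness of the boost-invariant vector in $\HS$, which in the Bunch-Davies-type representation is a consequence of the geometric KMS condition and the factor character of the wedge algebra (so that $\Omega$ generates the kernel of the modular generator), yields $\pi(A)\Omega=c\Omega$ for some $c\in\Cl$. Finally, observables being Bose, ordinary locality gives $\pi(\A(W_0))\subset\pi(\F(W_0'))'$, and Reeh-Schlieder makes $\Omega$ cyclic for $\pi(\F(W_0'))$ and therefore separating for $\pi(\A(W_0))''$; since $\pi(A)\in\pi(\A(W_0))''$ satisfies $(\pi(A)-c\cdot 1)\Omega=0$, one concludes $\pi(A)=c\cdot 1$. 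The delicate step in this plan is the uniqueness statement for boost-fixed vectors, which is not implied by Reeh-Schlieder alone but by the modular/factor structure of the vacuum representation.
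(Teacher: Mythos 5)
The first half of your argument is sound and in fact streamlines the paper's: instead of differentiating $\kappa\mapsto\pi(A)_{\xi,\kappa}$ at $\kappa=0$ and exponentiating via analytic vectors for the generator $K_\xi$, you observe that for fixed $n\neq 0$ the hypothesis already gives $U_\xi(s)\pi_n(A)U_\xi(-s)=\pi_n(A)$ on an interval of $s$ around $0$, which propagates to all $s\in\Rl$ by the group law; combined with faithfulness of a single $\pi_{n_0}$, $n_0\neq 0$, this yields $\alpha_{\lambda_\xi(t)}(A)=A$ exactly as in the paper. The genuine gap is in your last step. You deduce $\pi(A)\Omega=c\Omega$ from a claimed \emph{uniqueness of the boost-invariant vector}, attributed to the geometric KMS condition and the factor character of the wedge algebra. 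Neither is among the hypotheses (the proposition assumes only a quasifree, de Sitter- and gauge-invariant state with the Reeh--Schlieder property; the geometric KMS condition is mentioned in the paper only for the concrete Dirac example), and even granting that $\Omega$ is KMS for the boosts and that $\A(W_0)''$ is a factor, the kernel of the modular generator need not be one-dimensional: any element of the centralizer of the state applied to $\Omega$ produces another boost-invariant vector. So this step does not follow from what you have established.

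The paper closes the argument differently, and you need some version of its route. From $U_\xi(t)\pi(A)\Omega=\pi(A)\Omega$ one invokes Lemma~2.2 of \cite{BorchersBuchholz99}, which upgrades invariance of the \emph{vector} $\pi(A)\Omega$ under the one-parameter boost group to invariance under all of the de Sitter group; then $U(g)\pi(A)U(g)^{-1}\Omega=\pi(A)\Omega$ for every $g\in\tLo$, and the separating property of $\Omega$ for $\A(W_0)''$ (which you correctly derived from Reeh--Schlieder and locality) gives the \emph{operator} identity $U(g)\pi(A)U(g)^{-1}=\pi(A)$. Choosing $g=\bj_{W_0}$ and using twisted locality places $\pi(A)$ in $\A(W_0)''\cap(\A(W_0)'')'$, and only at this point does factoriality enter: by Powers--St{\o}rmer every quasifree gauge-invariant representation of the $\CAR$-algebra is primary, so the relative commutant is trivial and $\pi(A)\in\Cl\cdot 1$. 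In short, the missing ingredients in your proposal are the Borchers--Buchholz upgrade from boost- to de Sitter-invariance and the Powers--St{\o}rmer primarity result; the uniqueness of boost-fixed vectors that you lean on is not available in this generality.
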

\begin{proof}
	From the linearity of $\pi(A)\mapsto \pi(A)_{\xi,\kappa}$ together with the fact that each projection $E(n)$ is linear and commutes with boosts and gauge transformations follows
	\begin{equation}
	\label{eq:DecompositionObservableDeformed}
		\pi(A)_{\xi,\kappa}
		=\bigoplus_{n\in\S}\pi_n(A)_{\xi,\kappa}
		=\bigoplus_{n\in\S}\pi(A)_{\xi,\kappa}E(n).
	\end{equation}
	Consider now compact intervals $\Delta,\Delta'\subset\Rl$ and the spectral projections $\widetilde{E}_\xi(\Delta),\widetilde{E}_\xi(\Delta')$ of the generator $K_\xi$. For $\Phi,\Phi'\in D(K_\xi)^\infty$ define vectors $\Phi_\Delta:=\widetilde{E}_\xi(\Delta)\Phi$ and $\Phi'_{\Delta'}:=\widetilde{E}_\xi(\Delta')\Phi'$. As $\kappa\mapsto \pi(A)_{\xi,\kappa}$ is constant in a neighborhood of $\kappa=0$ there follows from (\ref{eq:DecompositionObservableDeformed})
	\begin{equation*}
	 0=(\Phi'_{\Delta'},\frac{d}{d\kappa}\pi(A)_{\xi,\kappa}E(n)\br|_{\kappa=0}\Phi_\Delta)_{\HS}.
	\end{equation*}
	Proposition (\ref{prop:CARDeformation}) for observables ($m=0$) implies
	\begin{equation*}
	 0
	=(\Phi'_{\Delta'},
	\frac{d}{d\kappa}U_\xi(\kappa n)\pi(A)U_\xi(-\kappa n)E(n)\br|_{\kappa=0}
	\Phi_\Delta)_{\HS}
	=in(\Phi'_{\Delta'},[K_\xi,\pi(A)]E(n)\Phi_\Delta)_{\HS}.
	\end{equation*}
	Note that $E(n)\Phi\in D(K_\xi)^\infty$ and  $\pi(A)\Phi\in D(K_\xi)^\infty$ for all $n\in \S,\Phi\in D(K_\xi)^\infty$ due to the invariance properties (\ref{eq:InvarianceDomain}) of $D(K_\xi)^\infty$. Hence
	\begin{equation*}
	 0=(\Phi'_{\Delta'},[(K_\xi)^l,\pi(A)]E(n)\Phi_\Delta)_{\HS}
	=(\Phi'_{\Delta'},[(K_\xi)^l,\pi_n(A)]\Phi_\Delta)_{\HS}
	\end{equation*}
	for all $n\neq 0$, $l\ge 0$ and since $\Phi_\Delta$, $\Delta\subset\Rl$ compact is an analytic vector for $K_\xi$, there follows
	\begin{equation*}
	 0=(\Phi'_{\Delta'},[U_\xi(t),\pi_n(A)]\Phi_\Delta)_{\HS}
	  =\sum_{l\ge 0}\frac{(it)^l}{l!}(\Phi'_{\Delta'},[(K_\xi)^l,\pi_n(A)]E(n)\Phi_\Delta)_{\HS}
	\end{equation*}
	As the linear span of $\{\Phi_\Delta:\Delta\subset\Rl\text{ compact},\,\Phi\in D(K_\xi)^\infty\}$ is dense in $\HS$ (see \cite[p.8]{Taylor86}), the bounded operator $[U_\xi(t),\pi_n(A)]$ vanishes on $\HS$ for $n\neq 0$. However, as
	\begin{equation*}
		0=U_\xi(t)\pi_n(A)U_\xi(t)^{-1}-\pi_n(A)=\pi_n(\alpha_{\lambda_\xi(t)}(A)-A)
	\end{equation*}
	for all $n\neq 0$ implies $\pi_m(\alpha_{\lambda_\xi(t)}(A)-A)=0$ for all $m\in \S$ it follows that $[U_\xi(t),\pi_n(A)]$ vanishes on $\HS$ for all $n\in\cal{S}$.
	
	 Since the GNS vector $\Omega$ is de Sitter invariant there holds
	\begin{equation*}
	 U_\xi(t)\pi(A)\Omega=\pi(A)U_\xi(t)\Omega=\pi(A)\Omega,
	\end{equation*}
	so the vector $\pi(A)\Omega$ is boost-invariant. In \cite[Lemma 2.2]{BorchersBuchholz99} it is shown that boost-invariant vectors must in fact be invariant under the whole de Sitter group. Hence
	\begin{equation*}
		U(g)\pi(A)U(g)^{-1}\Omega=U(g)\pi(A)\Omega=\pi(A)\Omega,\quad g\in \tLo.
	\end{equation*}
	From the Reeh-Schlieder property of the state follows $U(g)\pi(A)U(g)^{-1}=\pi(A)$ since $\Omega$ is separating for $\A(W_0)$. Pick $g={\bj}_{W_0}$ and we find
	\begin{equation*}
	 \pi(A)\in \A(W_0)''\cap \alpha_{{\bj}_{W_0}}(\A(W_0)'')\subset \A(W_0)''\cap (\A(W_0)'')'.
	\end{equation*}
	by locality. Powers and St\o rmer \cite{PowersStormer} have shown that every quasifree and gauge-invariant representation of a $\CAR$-algebra is primary so the local algebras are factors which implies $\pi(A)\in \Cl\cdot 1$.
\end{proof}

\subsubsection{Unitary inequivalence}
Let $\F$ be a $\CAR$-net over $(\hs,\cc)$ in a Fock representation $(\HS_P,\pi_P,\Omega_P)$ of a de Sitter- and gauge-invariant state. An example for such a projection is $P=1\oplus 0$. It commutes with $\cc$ and all gauge transformations. Furthermore, if $u$ is a representation of $\tLo$ on $\hs$ of the form $u=u_1\oplus u_2$, where $u_1,u_2$ are representations of $\tLo$ on $H$ which commute with $C$ and are mutual adjoints of each other, then the associated state (\ref{eq:QuasifreeState}) is de Sitter- and gauge-invariant.

\begin{remark}
Again, we drop the $P$-dependence in our notation since all representations of the $\CAR$-algebra are faithful.
\end{remark}

In a Fock representation the gauge unitaries take the form $V(s)=e^{isQ}$, where $Q=N\otimes 1-1\otimes N$ is the charge operator and $N$ is the number operator on the Fermionic Fock space over $PH$. The Fock vacuum is invariant under gauge transformations. The spectrum of $Q$ is $\Zl$ and $\HS$ is $\Zl$-graded 
\begin{equation}
	\label{eq:FockSpaceDecomposition}
 \HS=\bigoplus_{n\in\Zl}\HS_n,\quad \HS_n=\{\Phi\in\HS:Q\Phi=n\Phi\}.
\end{equation}
The decomposition of $\HS$ into charged sectors and particle sectors are connected via
\begin{equation*}
 \HS_n=\bigoplus_{k-l=n}\wedge^kPH \otimes \wedge^lPH.
\end{equation*}
The grading is implemented by $Y=(-1)^\cal{N}$, where $\cal{N}$ is the number operator on $\HS$ (see \cite{Foit83}). In a Fock representation the (co)spinors take the form
\begin{align}
\label{eq:cospinor_Fockrep}
	\pi(\Psi(f_-))&=a^*(0\oplus PCf_-)+a(Pf_-\oplus 0)\\
 \pi(\Psi^\dagger(f_+))&=a^*(PCf_+\oplus 0)+a(0\oplus Pf_+).
\end{align}
A straightforward computation shows that the $\tLo$- and gauge-invariance of the state implies 
\begin{equation}
\label{eq:DeformationVanishesOnOmega}
 \pi(F)_{\xi,\kappa}\Omega=\pi(F)\Omega,\quad F\in\F^\infty_\xi,
\end{equation}
since two unitaries drop out in (\ref{eq:WarpedConvolution}), which yields corresponding $\delta$-factors after integration.

Now we show that the deformed and undeformed nets are unitarily inequivalent. We proceed in a similar manner as in \cite{BuchholzLechnerSummers10,DappiaggiLechnerMorfa-Morales10}. Consider the wedge $W_0$ and a rotation $r^\phi$ about an angle $\phi> 0$ in the $(x^1,x^2)$-plane. It is clear that the (bounded) region
\begin{equation*}
 \C:=r^\phi W_0\cap r^{-\phi}W_0,\quad |\phi|<\pi/2
\end{equation*}
is a subset of $W_0$ and that the reflected region ${\bj}_{W_0}\C$ lies spacelike to $W_0$ and $r^\phi W_0$.

\begin{proposition}
 Let $\F_\kappa$ be the warped convolution of the $\CAR$-net $\F$ in a Fock representation of a de Sitter- and gauge-invariant state with Reeh-Schlieder property. Suppose that $u$ is a faithful representation of $\tLo$ on $\hs$ which commutes with $C$ and $P$. Then the GNS vector $\Omega$ is not cyclic for $\F(\C)_{\kappa}\,''$ for $\kappa\neq 0$. In particular, the nets $\F$ and $\F_\kappa$ are unitarily inequivalent for $\kappa\neq 0$.
\end{proposition}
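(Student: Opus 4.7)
The unitary inequivalence will follow from the non-cyclicity statement, since any unitary intertwiner $U$ with $U\Omega=\Omega$ and $U\F(\OO)U^*=\F_\kappa(\OO)$ would transport the Reeh-Schlieder cyclicity of $\Omega$ for $\F(\C)''$ (available in the undeformed theory by hypothesis) over to cyclicity for $\F(\C)_\kappa''$. So I aim to construct a non-zero vector $\Psi\in\HS$ orthogonal to $\F(\C)_\kappa''\Omega$.

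The geometric input is $\bj_{W_0}\C=r^\phi W_0'\cap r^{-\phi}W_0'\subset W_0'$, which follows from $\bj_{W_0}r^\phi=r^{-\phi}\bj_{W_0}$ (a direct computation using that $\bj_{W_0}$ flips $(x^1,\vec{x})$ while $r^\phi$ rotates in the $(x^1,x^2)$-plane). In the Fock representation I choose a smooth test spinor $f_+$ supported in $\bj_{W_0}\C$ with $Pf_+=0$ and $PCf_+\neq 0$ (possible since $CPC=1-P$; e.g.\ for $P=1\oplus 0$ one takes $f_+=0\oplus g$ with $g\neq 0$ smooth and supported in $\bj_{W_0}\C$). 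Setting $G:=\pi(\Psi^\dagger(f_+))\in\F(\bj_{W_0}\C)\cap\F^\infty_\xi$, the formulas (\ref{eq:cospinor_Fockrep}) give
\begin{equation*}
G\Omega=a^*(PCf_+\oplus 0)\Omega\neq 0,\qquad G^*\Omega=a^*(0\oplus Pf_+)\Omega=0,\qquad G^2=0,
\end{equation*}
and $G$ is pure Fermi of charge $+1$. Writing $G=\alpha_{\bj_{W_0}}(G_0)$ for some $G_0\in\F(\C)\subset\F(W_0)$, Lemma \ref{lem:WCcovariance}(a) gives $G_{\xi,-\kappa}=\alpha_{\bj_{W_0}}((G_0)_{\xi,\kappa})$, and Theorem \ref{thm:main}(b) places this in the twisted commutant of $\F(W_0)_\kappa\supseteq\F(\C)_\kappa$ (the inclusion by isotony, since $\C\subset W_0$). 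Applying (\ref{eq:DeformationVanishesOnOmega}) to $G$ and to $G^*$ yields $G_{\xi,-\kappa}\Omega=G\Omega$ and $(G^*)_{\xi,-\kappa}\Omega=G^*\Omega=0$.

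The candidate vector is $\Psi:=G_{\xi,-\kappa}G\Omega$. Since $G\Omega\in\HS_1$ and $G$ has charge $n=1$, Proposition \ref{prop:CARDeformation} gives
\begin{equation*}
\Psi=U_\xi(-\kappa)\,G\,\alpha_{\lambda_\xi(2\kappa)}(G)\,\Omega\in\HS_2,
\end{equation*}
a two-particle vector; for $\kappa\neq 0$ the faithfulness of the $\tLo$-representation $u$ on $\hs$ ensures that $u_\xi(2\kappa)f_+$ is linearly independent of $f_+$, so the antisymmetrized CAR two-particle wave function does not collapse and $\Psi\neq 0$ (whereas at $\kappa=0$ this expression reduces to $G^2\Omega=0$). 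To verify $\Psi\perp\F(\C)_\kappa\Omega$, decompose $A\in\F(\C)_\kappa$ into Bose and Fermi parts $A=A_++A_-$. As $(G^*)_{\xi,-\kappa}=(G_{\xi,-\kappa})^*$ is pure Fermi, the twisted commutation yields $(G^*)_{\xi,-\kappa}A=(A_+-A_-)(G^*)_{\xi,-\kappa}$, hence
\begin{equation*}
(A\Omega,\Psi)=\bigl((G^*)_{\xi,-\kappa}A\Omega,\,G\Omega\bigr)=\bigl((A_+-A_-)(G^*)_{\xi,-\kappa}\Omega,\,G\Omega\bigr)=0,
\end{equation*}
using $(G^*)_{\xi,-\kappa}\Omega=0$. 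Extension to $\F(\C)_\kappa''$ by strong density of $\F(\C)_\kappa$ in its weak closure completes the proof.

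The main obstacle is rigorously securing $\Psi\neq 0$: this amounts to showing that the one-particle boost $u_\xi(2\kappa)$ does not fix $f_+$ for $\kappa\neq 0$ and the chosen test spinor. The abstract faithfulness of $u$ on $\hs$ gives non-triviality of the boost operator, but the one-particle statement relies on the boost generator on $PH$ having no eigenvectors among test spinors supported in $\bj_{W_0}\C$, a standard consequence of the absolutely continuous spectrum of boost generators on the one-particle space of the free Dirac field on de Sitter. A secondary technical point is the careful tracking of the Bose/Fermi twist $Z$ in the commutation argument for $(A\Omega,\Psi)$, which is what causes the signs in $(A_+-A_-)$ above and is the reason the choice $G^*\Omega=0$ collapses the expression regardless of how $A$ distributes between charged sectors.
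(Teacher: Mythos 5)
Your strategy --- exhibiting an explicit nonzero vector orthogonal to $\F(\C)_\kappa\Omega$ by means of a nilpotent charged field in $\bj_{W_0}\C$ --- is genuinely different from the paper's, but it fails at the very first step: the operator $G$ you need cannot exist under the hypotheses of the proposition. You require $G=\pi(\Psi^\dagger(f_+))\in\F(\bj_{W_0}\C)$ with $G\Omega\neq 0$ and $G^*\Omega=0$. The adjoint $G^*=\pi(B(\cc(f_+\oplus 0)))$ is again a smeared field localized in $\bj_{W_0}\C$ (because $\cc\,\hs_0\subset\hs_0$ and the net is covariant), and in a Fock representation $G^*\Omega$ is a one-particle vector. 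But the Reeh--Schlieder property is part of the hypotheses: $\Omega$ is cyclic for the algebra of the nonempty open causal complement of the bounded region $\bj_{W_0}\C$, hence (by the standard twisted-locality argument) separating for $\F(\bj_{W_0}\C)''$, so $G^*\Omega=0$ forces $G^*=0$ and therefore $f_+=0$, since $\|B(f)\|^2\ge\|f\|^2/2$. Your illustrative choice $P=1\oplus 0$ makes the problem visible rather than solving it: for that projection every $\pi(\Psi^\dagger(h))$ is a pure annihilation operator, so the corresponding quasifree state violates Reeh--Schlieder outright. A single smeared Fermi field localized in a region with nonempty causal complement can never annihilate the vacuum of a Reeh--Schlieder state, so the vector $\Psi=G_{\xi,-\kappa}G\Omega$ with the properties you want is unavailable. (The orthogonality computation itself, via the twisted commutant and $(G^*)_{\xi,-\kappa}\Omega=0$, would be fine if such a $G$ existed.)

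Even granting such a $G$, the non-vanishing of $\Psi$ rests on $PCf_+$ and $u_\xi(2\kappa)PCf_+$ being linearly independent, which, as you yourself note, does not follow from faithfulness of $u$ (faithfulness excludes $u_\xi(2\kappa)=1$, not the existence of eigenvectors), and the appeal to absolutely continuous boost spectrum is a model-specific input outside the stated hypotheses. The paper avoids both difficulties by arguing by contradiction at the level of operators rather than vectors: if $\Omega$ were cyclic for $\F(\C)_\kappa{}''$, it would be separating for the commutant of $\F(\bj_{W_0}\C)_\kappa$, which contains both $\F(W_0)_\kappa$ and $\F(r^\phi W_0)_\kappa$; since $F_{\xi,\kappa}\Omega=F\Omega$, the two deformed fields $\pi(\Psi(f_-))_{\xi,\kappa}$ and $\pi(\Psi(f_-))_{r^\phi_*\xi,\kappa}$ (the latter obtained by covariance from $\Psi(u(r^{-\phi})f_-)$) agree on $\Omega$ and hence coincide as operators; evaluating both on charge-one one-particle vectors via Proposition \ref{prop:CARDeformation} forces $U_\xi(\kappa)$ and $U_{r^\phi_*\xi}(\kappa)$ to agree on a total set of vectors, and faithfulness of the representation converts this into the group identity $\lambda_\xi(\kappa)r^\phi=r^\phi\lambda_\xi(\kappa)$, which fails for $\kappa\neq 0$ and $0<|\phi|<\pi/2$. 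That route reduces everything to the non-commutativity of boosts and rotations and requires neither a spectral assumption nor a specially prepared test function; I would recommend reworking your argument along those lines.
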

\begin{proof}
Let $f_-\in H(\C)^\infty_\xi$. Hence $f_-,u(r^{-\phi})f_-\in H(W_0)^\infty_\xi$ and the warped operators $\pi(\Psi(f_-))_{\xi,\kappa}$, $\pi(\Psi(u(r^{-\phi})f_-))_{\xi,\kappa}$ are elements of $\F(W_0)_{\kappa}$. From the de Sitter covariance (\ref{eq:DeformedOperatorsDeSitterCovariance}) follows
\begin{equation*}
 U(r^\phi)\pi(\Psi(u(r^{-\phi})f_-))_{\xi,\kappa}U(r^\phi)^{-1}
=\pi(\Psi(f_-))_{r^\phi_*\xi,\kappa},
\end{equation*}
which is an element of $\F(r^\phi W_0)_{\kappa}$.
Assume now that $\Omega$ is cyclic for $\F(\C)_{\kappa}\,''$. This is equivalent to $\Omega$ being cyclic for $\F({\bj}_{W_0}\C)_{\kappa}\,''$ since $U({\bj}_{W_0})$ is unitary and $U({\bj}_{W_0})\Omega=\Omega$. Hence $\Omega$ is separating for $\F({\bj}_{W_0}\C)_{\kappa}\,'$, which contains $\F(W_0)_{\kappa}$ and $\F(r^\phi W_0)_{\kappa}$ by locality. From (\ref{eq:DeformationVanishesOnOmega}) follows 
$\pi(\Psi(f_-))_{\xi,\kappa}\Omega
=\pi(\Psi(f_-))\Omega=\pi(\Psi(f_-))_{r^\phi_*\xi,\kappa}\Omega$, {\it i.e.}, 
\begin{equation}
\label{eq:UIsoughtcontradiction}
\pi(\Psi(f_-))_{\xi,\kappa}=\pi(\Psi(f_-))_{r^\phi_*\xi,\kappa}
\end{equation}
by the separating property of $\Omega$. Consider now a vector $\vphi\oplus 0\in PH\oplus PH$ of charge one in the one-particle space. Using Proposition \ref{prop:CARDeformation} for $m=-1$ we find
\begin{align*}
 \pi(\Psi(f_-))_{\xi,\kappa}(\vphi\oplus 0)
&=
\sum_{n\in\Zl}\pi(\Psi(u_\xi(\kappa n)f_-))U_\xi(\kappa)E(n)(\vphi\oplus 0)\\
&=
\Bl[a^*(0\oplus PCu_\xi(\kappa)f_-))+a(Pu_\xi(\kappa)f_-\oplus 0))\Br](u_\xi(\kappa)\vphi\oplus 0)\\
&=(0\oplus PCu_\xi(\kappa)f_-)\wedge \bl(u_\xi(\kappa)\vphi\oplus 0\br)+ 
(Pu_\xi(\kappa)f_-\oplus 0,u_\xi(\kappa)\vphi\oplus 0)_{\HS}\Omega\\
&=U_\xi(\kappa)(0\oplus PCf_-)\wedge(\vphi\oplus 0)
+(Pf_-,\vphi)_{PH}\Omega.
\end{align*}
For the second equality we used that $\vphi\oplus 0$ has charge one and the explicit form (\ref{eq:cospinor_Fockrep}) of cospinors in a Fock representation. For the third equality we used the usual action of the Fermi creation and annihilation operators on Fock vectors. For the fourth equality we used that $u$ commutes with $C$ and $P$ and the fact that $U_\xi(\kappa)$ is the second quantization of $u_\xi(\kappa)$. By the same computation we find 
\begin{equation*}
 \pi(\Psi(f_-))_{r^\phi_*\xi,\kappa}(\vphi\oplus 0)
=U_{r^\phi_*\xi}(\kappa)(0\oplus PCf_-)\wedge(\vphi\oplus 0)+(Pf_-,\vphi)_{PH}\Omega.
\end{equation*}
As $\Psi^\dagger(f)_{\xi,\kappa}\Phi=\Psi^\dagger(f)_{r^\phi_*\xi,\kappa}\Phi$ for all $\Phi\in\HS^\infty_\xi$ by (\ref{eq:UIsoughtcontradiction}), there follows
\begin{equation*}
 U_\xi(\kappa)(0\oplus PCf_-)\wedge(\vphi\oplus 0)
=U_{r^\phi_*\xi}(\kappa)(0\oplus PCf_-)\wedge(\vphi\oplus 0).
\end{equation*}
Since $U$ is faithful, this implies $1=\lambda_\xi(-\kappa)r^\phi\lambda_\xi(\kappa)r^{-\phi}$ which yields $\lambda_\xi(\kappa)r^\phi=r^\phi\lambda_\xi(\kappa)$. However, for $\kappa\neq 0$ this is only true for $\phi=0$ since boosts in the $x^1$-direction do not commute with rotations in the $(x^1,x^2)$-plane and contradicts our initial assumption about the rotation $r^\phi$. 

Therefore, the operator $\Psi(f_-)_{r^\phi_*\xi,\kappa}$ does depend on $\phi$, so that the cyclicity assumption on $\Omega$ for $\F(\C)_{\xi,\kappa}\,''$, $\kappa\neq 0$ is not valid. On the other hand, we know that $\Omega$ is cyclic for $\F(\C)''$ by the Reeh Schlieder property of the state. A unitary which leaves $\Omega$ invariant and maps $\F(\C)$ onto $\F(\C)_{\xi,\kappa}$ would preserve this property, from which we conclude that the undeformed and deformed net are not unitarily equivalent.
\end{proof}

\section{Conclusion and outlook} 
\label{sec:Conclusions}
In this paper we applied the warped convolution deformation method to quantum field theories with global $\Uone$ gauge symmetry on de Sitter spacetime. We used the joint action of a one-parameter group of boosts associated with a wedge and the internal symmetry group as an $\Rl^2$-action to define the deformation. The deformed theory turns out to be wedge-local and non-isomorphic to the undeformed one for a class of wedge-local field nets, including the free charged Dirac field.
\\

\noindent
In the course of writing up this paper also partial negative results were obtained, which we would like to briefly comment on. 

The warped convolution using a combination of boosts and internal symmetries can, in principle, also be defined for quantum field theories on Minkowski space. However, the covariance properties of the deformed operators are very different in this setting and a statement similar to Theorem \ref{thm:main} seems not to hold. The reason for this is that $\Gamma_{W_0}$ is, in contrast to the translations, not a normal subgroup of the Poincar\'e group. For a Poincar\'e group element $(a,\Lambda)$ one has
\begin{equation*}
 \alpha_{(a,\Lambda)}(F_{\xi,\kappa})=\alpha_{(a,\Lambda)}(F)_{(a,\Lambda)_*\xi,\kappa}
\end{equation*}
and 
\begin{equation}
\label{eq:PoincareConjugation}
 (a,\Lambda)(0,\Lambda(t))(a,\Lambda)^{-1}=(-\Lambda\Lambda(t)\Lambda^{-1}a+a,\Lambda\Lambda(t)\Lambda^{-1})
\end{equation}
is the flow which is associated with $(a,\Lambda)_*\xi$. Observe that the Lorentz group acts on $\Gamma_{W_0}$ merely by conjugation. For a translation $(a,1)W_0\subset W_0$ there is
\begin{equation*}
 \alpha_{(a,1)}(F_{\xi,\kappa})\neq \alpha_{(a,1)}(F)_{\xi,\kappa},
\end{equation*}
in general, since also a translational part is involved in (\ref{eq:PoincareConjugation}).

An interesting question is whether other Abelian subgroups of the de Sitter group can be used to define quantum field theories in terms of warped convolutions. A complete classification (up to conjugacy) of all subgroups of $\L_0$ in terms of subalgebras of its Lie algebra was given in \cite{PateraWinternitzZassehaus1976} (see also \cite{Shaw70}, \cite{Hall04} for the $\SO(1,3)_0$ case). A basis $M_{\mu\nu}$, $\mu,\nu=0,\dots,4$ for the Lie algebra of $\L_0$ satisfies
\begin{equation*}
 [M_{\mu\nu},M_{\rho\sigma}]=\eta_{\mu\rho}M_{\nu\sigma}+\eta_{\nu\sigma}M_{\mu\rho}-\eta_{\nu\rho}M_{\mu\sigma}-\eta_{\mu\sigma}M_{\nu\rho}.
\end{equation*}
It can be realized by matrices $M_{\mu\nu}=E_{\mu\nu}-E_{\nu\mu}$, $M_{0\mu}=M_{\mu 0}=E_{0\mu}-E_{\mu 0}$, where the matrix $E_{\mu\nu}$ has a one at the intersection of the $\mu$-th row and $\nu$-th column and zeros everywhere else. The two-dimensional Abelian subgroups of $\L_0$ are listed in table \ref{tab:Subgroups}.
\begin{table}[t]
\label{tab:Subgroups}
\begin{center}
	\begin{tabular}{ccc}
	\hline
	\hline
		 & subgroup & Lie algebra generators\\
	\hline
	$\L_1$ & $\SO(2)\times \SO(2)$ & $M_{12},M_{34}$\\
	$\L_2$ & $\mathrm{O}(1,1)\times \SO(2)$ & $M_{01},M_{23}$\\
	$\L_3 $& $\Rl^2$ & $M_{12}-M_{01},M_{23}-M_{03}$\\
	$\L_4$ & $\Rl\times \SO(2)$ & $M_{12}-M_{01},M_{34}$\\
	\hline
	\hline
	\end{tabular}	
\end{center}
\caption{Two-dimensional Abelian subgroups of the de Sitter group.}
\end{table}
$\L_1$ consists of spatial rotations in the $(x^1,x^2)$- and $(x^3,x^4)$-plane. $\L_2$ are boosts in the $x^1$-direction and rotations in the $(x^2,x^3)$-plane. $\L_3$ corresponds to null rotations (translational part of the stabilizer group of a light ray). $\L_4$ is a combination of a null rotation and a spatial rotation. All of these groups can be used to define a warped convolution with the associated $\Rl^2$-action from the representation. 
Since we are on a curved spacetime it appears to be reasonable to require that the group which is used for the deformation is a subgroup of the stabilizer of a wedge. The reason is that there is not an analogue of the spectrum condition on Minkowski space available which restricts the spectral properties of the generators which are associated with isometries (The microlocal spectrum condition only gives a restriction on the singularity structure of the two-point function.). Comparing the subgroup structure of $\L_0(W_0)$ with the above groups shows that only $\L_2$ is a subgroup. However, $\L_2$ violates conditions a) and b) in Definition \ref{def:CausalBorchersSystem} for certain reflections: Denote by $F_{\zeta,\kappa}$ the warped operator, where $\zeta=(M_{01},M_{23})$ is a pair of Killing vector fields (compare \cite{DappiaggiLechnerMorfa-Morales10}) and consider the reflection $j_{12}(x^0,x^1,x^2,x^3,x^4)=(x^0,-x^1,-x^2,x^3,x^4)$ which satisfies $j_{12}W_0=(W_0)'$. The associated flow $\Lambda(t,s):=\exp(t M_{01})\exp(s M_{23}),\,  t,s\in\Rl$ transforms as
\begin{equation*}
 j_{12}\Lambda(t,s)j_{12}=\Lambda(-t,-s)
\end{equation*}
so that $\alpha_{j_{12}}(F_{\zeta,\kappa})=\alpha_{j_{12}}(F)_{{j_{12}}_*\zeta,\kappa}=F_{\zeta,\kappa}$ and condition b) is violated. Similar problems also appear if one uses a combination of boosts and translations along the edge of the wedge. From these observations we conclude that the position of the subgroup, which is used for the deformation, within the isometry is very important and that a modification of the standard warping formula is necessary in these cases. 

A deformation with purely internal symmetries, e.g. $\Uone\times\Uone$, did not appear to be interesting, because an adaption of Proposition \ref{prop:CARDeformation} to this case yields that the deformation is trivial on the level observables and also trivial for generators $B(f)$, provided the induced charge structure of the gauge groups is the same.
\\

\noindent
The deformation scheme in this paper is very different from the one in \cite{DappiaggiLechnerMorfa-Morales10}, where the Killing flow associated with the edge of a wedge was used to formulate the deformation. It would be desirable to establish a connection between the two approaches. In the de Sitter case the edge is a two-sphere, which is an $\SO(3)$-orbit, and an approach as in \cite{DappiaggiLechnerMorfa-Morales10} would require a generalization of the warped convolution to group actions of $\SO(3)$. But deformations of C$^*$-algebras which are based on actions of general non-Abelian groups do not seem to be available so far (see however \cite{Bieliavsky2002} for certain examples).

It appears to be a challenging task to generate new examples of deformed quantum field theories whose covariance and localization properties are well-behaved. However, the quest of finding new deformation formulas is a worthwhile task and is expected to yield better insights into the nature of interacting quantum field theories.
\\

\noindent
{\bf Acknowledgements}\\
I would like to thank M. Bischoff, C. Dappiaggi, W. Dybalski, H. Grosse, G. Lechner, J. Schlemmer and J. Yngvason for helpful discussions. I am grateful to G. Lechner for useful comments on a first draft of this paper. The hospitality of the Erwin Schr\"odinger Institute and the financial support of the EU-network MRTN-CT-2006-031962, EU-NCG are gratefully acknowledged.

%
%

\end{document}